\documentclass[12pt]{article}
\usepackage{enumerate}
\usepackage{amsfonts}
\usepackage{amsmath}
\usepackage{amssymb}
\usepackage{amsthm}
\usepackage{latexsym}
\usepackage{color}
\usepackage{graphicx}
\usepackage{wrapfig}
\usepackage{caption}
\usepackage{float}

\def\H{\mathcal{H}}

\def\S{\mathfrak{S}}
\def\P{\mathfrak{P}}

\def\T{\mathfrak{T}}

\def\N{\mathbb{N}}

\newcommand{\rank}{\mathrm{rank}}

\newcommand{\Tr}{\mathrm{Tr}}
\newcounter{defin}  \newcounter{lemma}  \newcounter{theorem}
\newcounter{proposition} \newcounter{corol}  \newcounter{remark} \newcounter{example}
\newenvironment{lemma}{\par\refstepcounter{lemma}     \textbf{Lemma \thelemma.} }{\rm\par}
\newenvironment{theorem}{\par\refstepcounter{theorem}     \textbf{Theorem \thetheorem.}\ }{\rm\par}
\newenvironment{proposition}{\par\refstepcounter{proposition}     \textbf{Proposition \theproposition.}\ }{\rm\par}
\newenvironment{corollary}{\par\refstepcounter{corol}     \textbf{Corollary \thecorol.} }{\rm\par}

\newenvironment{remark}{\par\refstepcounter{remark}     \textbf{Remark \theremark.}}{\rm\par}
\newenvironment{example}{\par\refstepcounter{example}     \textbf{Example \theexample.}}{\rm\par}

\begin{document}

\title{Partial majorization and Schur concave functions on the sets of quantum and classical states}
%state-dependent improvements

%\title{Continuity bound for the output entropy of a quantum channel with the input energy constraint and other results of these type}

%\title{Two families of tight semicontinuity bounds for the von Neumann entropy}

\author{M.E.~Shirokov\footnote{email:msh@mi.ras.ru}\\
Steklov Mathematical Institute, Moscow, Russia}
\date{}
\maketitle
%\vspace{50pt}
\begin{abstract}
We  construct for a Schur concave function $f$ on the set of quantum states a tight  upper bound on the difference  $f(\rho)-f(\sigma)$ for a quantum state $\rho$ with finite $f(\rho)$ and any quantum state $\sigma$ $m$-partially majorized by the state $\rho$ in the sense described in \cite{FCB}. We also  obtain a tight upper bound on this difference
under the additional condition $\frac{1}{2}\|\rho-\sigma\|_1\leq\varepsilon$ and find simple sufficient conditions for vanishing  this bound with $\,\min\{\varepsilon,1/m\}\to0\,$.

The obtained results are applied  to the von Neumann entropy. The concept of $\varepsilon$-sufficient majorization rank of a quantum state with finite entropy is introduced and a tight upper bound on this quantity is derived  and applied to the Gibbs states of a quantum oscillator.

We also show how the  obtained results can be reformulated for Schur concave functions on the set of probability distributions with a finite or countable set of outcomes.
\end{abstract}

\tableofcontents

\section{Introduction}

A function $f$ on the set $\S(\H)$ of states  of a quantum system described by Hilbert space $\H$ is called \emph{Schur concave} if
\begin{equation}\label{scr}
 f(\rho)\leq f(\sigma)
\end{equation}
for any quantum states   $\rho$ and $\sigma$ such that
\begin{equation}\label{m-cond+}
\sum_{i=1}^{k}p_i\geq\sum_{i=1}^{k}q_i
\end{equation}
for all natural $k$,  where  $\{p_i\}_{i=1}^{+\infty}$
and $\{q_i\}_{i=1}^{+\infty}$ are the sequences
of eigenvalues of the states $\rho$ and $\sigma$ arranged in the non-increasing order (taking the multiplicity into account) \cite{New,M-2,Bhatia}.\smallskip

A function $f$ on $\S(\H)$ is called \emph{Schur convex} if the function $-f$ is Schur concave.\smallskip

There are many functions on the set $\S(\H)$ used in quantum information theory which are either Schur concave or Schur convex. For example, the von Neumann, Renyi and Tsallis
entropies of a quantum state are  Schur concave functions (for any parameters of the latter two entropies). Note, at the same time, that the Renyi entropy of order $\,\alpha>1\,$ is not concave in the standard (Jensen) sense, so its Schur concavity can be treated as a particular substitution of the standard concavity.\smallskip

If inequality (\ref{m-cond+}) holds for any $k$ then, according to the modern terminology, we say that the state $\rho$ \emph{majorizes} the state $\sigma$ and write
$\,\rho\succ\sigma$ \cite{N&Ch,M-2}.\smallskip

If $\rho$ and $\sigma$ are infinite rank states then to verify the relation $\,\rho\succ\sigma$ we have to check the validity of an infinite number of inequalities. Naturally, the question arises what can be said about infinite rank states $\rho$ and $\sigma$ if inequality (\ref{m-cond+}) is valid only for  $k=1,2,...,m$. This leads us to the concept of \emph{$m$-partial majorization} for quantum states. According to the definition in \cite{FCB}  a state $\rho$  \emph{$m$-partially majorizes} a state $\sigma$ if inequality (\ref{m-cond+}) holds for  $\,k=1,2,..,m$. Denote this relation by $\,\rho\stackrel{\,m}{\succ}\sigma$. It is easy to see that this relation
is reflexive and transitive but it is not antisymmetric ($\,\rho\stackrel{\,m}{\succ}\sigma$ and $\,\sigma\stackrel{\,m}{\succ}\rho$ does not imply $\,\rho=\sigma$).\smallskip

If $\,\rank \rho=n<+\infty\,$  then the $(n-1)$-partial majorization of any state $\sigma$ by the state $\rho$ is equivalent to the standard majorization and, hence, implies inequality (\ref{scr}), but for each
natural $\,m<n-1\,$ it is easy to construct a state $\sigma$ such that $\,\sigma\stackrel{\,m}{\prec}\rho$, but the state $\sigma$ is not majorized by the state $\rho$.  The same claim holds
for any natural $m$ provided that $\rho$ is an infinite-rank state.

If $\,\rho\stackrel{\,m}{\succ}\sigma$ for some $m$  but $\,\rho\nsucc\sigma$ then we can not assert that (\ref{scr}) holds for a Schur concave function $f$. However,
we may try to use the relation $\,\rho\stackrel{\,m}{\succ}\sigma$ to obtain an upper bound on a possible violation of the inequality (\ref{scr})
which is characterised by the difference  $f(\rho)-f(\sigma)$. This task  is naturally extended to analysis of the maximal violation of inequality (\ref{scr}) for a given state $\rho$ and any state $\sigma$ $m$-partially majorized by $\rho$ and close to $\rho$ w.r.t. the trace norm distance.\smallskip

In \cite{FCB}, the above tasks are considered in the case when $f$ is the von Neumann entropy $S$ -- the most important Schur concave function used in quantum theory.
As a result, tight upper bounds on the difference $S(\rho)-S(\sigma)$ valid under the conditions
$$
\Tr H\rho\leq E,\quad \rho\stackrel{\,m}{\succ}\sigma\,\quad \textrm{and} \quad\textstyle\frac{1}{2}\|\rho-\sigma\|_1\leq\varepsilon
$$
are obtained and analysed (here $H$ is a positive operator treated as a Hamiltonian of a quantum system).\smallskip

In Section 3 of this article, we propose an universal way to obtain  upper bounds on
\begin{equation}\label{svp}
\sup\left\{f(\rho)-f(\sigma)\,\left|\, \sigma\stackrel{\,m}{\prec}\rho,\, \textstyle\frac{1}{2}\|\rho-\sigma\|_1\leq\varepsilon \right.\right\}
\end{equation}
for any Schur concave function $f$, a state $\rho$ with finite $f(\rho)$ and any $\varepsilon\in[0,1]$.\smallskip

In Section 4, we use  this technique to construct a tight upper bound on the supremum in (\ref{svp}) depending on the spectrum of $\,\rho\,$
and  find simple sufficient conditions for vanishing  this bound with $$\min\left\{\varepsilon,\frac{1}{m}\right\}\to0.$$
The simplest of these conditions (the lower semicontinuity of the function $f$ on $\S(\H)$) holds for the basic Schur concave functions used in quantum information theory, in particular, it holds for the von Neumann, Renyi and Tsallis entropies of a quantum state.\smallskip

In Section 5, we apply the obtained general results to the von Neumann entropy and use the tight upper bound on
$$
\sup\left\{S(\rho)-S(\sigma)\,\left|\, \sigma\stackrel{\,m}{\prec}\rho \right.\right\}
$$
to derive a tight upper bound on the \emph{$\varepsilon$-sufficient majorization rank} of a state $\rho$ with finite von Neumann entropy defined as the
minimal $m$ such that
$$
\sup\left\{\frac{S(\rho)-S(\sigma)}{S(\rho)}\,\left|\,\sigma\stackrel{\,m-1}{\prec}\rho \right.\right\}\leq\varepsilon
$$
(the use of the $(m-1)$-partial majorization here implies the  coincidence of the\break $0$-sufficient majorization rank of a state with the ordinary rank of this state).
We use this upper bound to estimate the $\varepsilon$-sufficient majorization rank of the Gibbs state of a quantum oscillator with different mean number of quanta.\smallskip

In Section 6, it is shown how the  obtained results can be reformulated for Schur concave functions on the set of probability distributions with a finite or countable set of outcomes.

In Section 7, a brief overview of the main  results of the article is given.

\section{Preliminaries}

Throughout the article we assume that $\mathcal{H}$ is an infinite-dimensional separable Hilbert space and $\mathfrak{T}( \mathcal{H})$ is the
Banach space of all trace-class operators on $\mathcal{H}$  with the trace norm $\|\!\cdot\!\|_1$. Write  $\mathfrak{T}_+(\mathcal{H})$ for the positive cone in $\mathfrak{T}( \mathcal{H})$. Denote the set of quantum states (operators
in $\mathfrak{T}_+(\mathcal{H})$ with unit trace) by $\mathfrak{S}(\mathcal{H})$ \cite{N&Ch,H-SCI,Wilde}.

%Write $I_{\mathcal{H}}$ for the unit operator on a Hilbert space
%$\mathcal{H}$ and $\id_{\mathcal{\H}}$ for the identity
%transformation of the Banach space $\mathfrak{T}(\mathcal{H})$.

The closed subspace spanned by the eigenvectors of an operator $\rho\in\T_{+}(\H)$ corresponding to its positive eigenvalues is called the \emph{support} of $\rho$ and denoted by $\mathrm{supp}\rho$.  The \emph{rank} $\rank\rho$ of $\rho$ is the dimension of $\mathrm{supp}\rho$.

We will essentially use the Mirsky inequality
\begin{equation}\label{Mirsky-ineq+}
  \sum_{i=1}^{+\infty}|p_i-q_i|\leq \|\rho-\sigma\|_1
\end{equation}
valid for any states $\rho$ and $\sigma$ in $\S(\H)$, where  $\{p_i\}_{i=1}^{+\infty}$
and $\{q_i\}_{i=1}^{+\infty}$ are the sequences
of eigenvalues of $\rho$ and $\sigma$ arranged in the non-increasing order (taking the multiplicity into account) \cite{Mirsky,Mirsky-rr}.

The \emph{von Neumann entropy} of a quantum state
$\rho \in \mathfrak{S}(\H)$ is defined by the expression
$\,S(\rho)=\Tr\eta(\rho)$, where  $\,\eta(x)=-x\ln x\,$ if $\,x>0\,$
and it is assumed that $\eta(0)=0$. The von Neumann entropy is a concave lower semicontinuous function on the set~$\mathfrak{S}(\H)$ taking values in~$[0,+\infty]$ \cite{H-SCI,L-2,W}.

We will use the  homogeneous extension $\widehat{S}$ of the von Neumann entropy to the positive cone $\T_+(\H)$ defined as
\begin{equation}\label{S-ext}
\widehat{S}(\rho)\doteq(\Tr\rho)S\!\left(\frac{\rho}{\Tr\rho}\right)=\Tr\eta(\rho)-\eta(\Tr\rho)
\end{equation}
for any nonzero operator $\rho$ in $\T_+(\H)$ and equal to $0$ at the zero operator \cite{L-2}.\smallskip

A state $\rho$ in $\S(\H)$ \emph{majorizes}
a state $\sigma$ in $\S(\H)$ if
\begin{equation*}%\label{m-cond++}
\sum_{i=1}^{k}p_i\geq\sum_{i=1}^{k}q_i,\quad \forall k,
\end{equation*}
where  $\{p_i\}_{i=1}^{+\infty}$
and $\{q_i\}_{i=1}^{+\infty}$ are the sequences
of eigenvalues of $\rho$ and $\sigma$ arranged in the non-increasing order (taking the multiplicity into account) \cite{New,M-2,Bhatia}.
This relation is usually denoted by $\,\rho\succ \sigma$.

A function $f$ on $\S(\H)$ is called \emph{Schur concave} if (cf.\cite{New,N&Ch,M-2,Bhatia,Grig})
$$
\rho\succ \sigma  \quad \Rightarrow\quad  f(\rho)\leq f(\sigma).
$$
A function $f$ on $\S(\H)$ is called \emph{Schur convex} if the function $-f$ is Schur concave.
\smallskip

The concepts of majorization of quantum states and Schur concave functions on the set of quantum states originate from the same
concepts for  probability distributions and functions on the set of probability distributions.\smallskip

A probability distribution  $\,\bar{p}=\{p_i\}_{i=1}^{n}$ with $\,n\leq+\infty\,$ outcomes majorizes a probability distribution $\,\bar{q}=\{q_i\}_{i=1}^{n}$ if
\begin{equation}\label{m-pd}
\sum_{i=1}^{k}p^{\downarrow}_i\geq\sum_{i=1}^{k}q^{\downarrow}_i\qquad  k=1,2,..,n,
\end{equation}
where $\{p_i^{\downarrow}\}_{i=1}^{n}$ and $\{q_i^{\downarrow}\}_{i=1}^{n}$
are the  probability distributions obtained from the distributions $\{p_i\}_{i=1}^{n}$ and $\{q_i\}_{i=1}^{n}$
by rearrangement in the non-increasing order. This relation is denoted by $\,\bar{p}\succ\bar{q}$.\smallskip

The Schur concave and Schur convex functions on the set $\P^n$ of all probability distributions with $\,n\leq+\infty\,$ outcomes
are defined using the partial order $"\succ"$ in the same way as  in the quantum case described before.\smallskip

\textbf{Note:} Throughout the article  we use the term  "$n$-tuple of numbers" in both cases $\,n\in\N\,$ and $\,n=+\infty\,$ simultaneously. In the second case we assume that it is a countable ordered subset of $\mathbb{R}$ (a sequence).

\section{Partial majorization and  Schur concavity}

According to the definition given in \cite{FCB} a quantum state $\rho$ $m$-partially majorizes
a quantum state $\sigma$ if
\begin{equation}\label{m-cond++}
\sum_{i=1}^{k}p_i\geq\sum_{i=1}^{k}q_i,\quad k=1,2,...,m,
\end{equation}
where  $\{p_i\}_{i=1}^{+\infty}$
and $\{q_i\}_{i=1}^{+\infty}$ are the sequences
of eigenvalues of the states $\rho$ and $\sigma$ arranged in the non-increasing order (taking the multiplicity into account). In this case we will write $\,\rho\stackrel{\,m}{\succ}\sigma$.

If $\,\rank \rho=n<+\infty\,$  then the $(n-1)$-partial majorization of any state $\sigma$ by the state $\rho$ is equivalent to the standard majorization but
for each natural $\,m<n-1\,$ it is easy to construct a state $\,\sigma\,$ such that $\,\rho\stackrel{\,m}{\succ}\sigma$ but $\,\rho\nsucc\sigma$.

If $\,\rank \rho=+\infty\,$  then
$$
\rho\succ\sigma \quad \Leftrightarrow \quad \rho\stackrel{\,m}{\succ}\sigma \quad \forall m\in \N.
$$

If $f$ is a Schur concave function and $\,\rho\stackrel{\,m}{\succ}\sigma$ for some $m$  but $\,\rho\nsucc\sigma$ then we can not assert that $f(\rho)\leq f(\sigma)$. However,
we may try to use the relation $\,\rho\stackrel{\,m}{\succ}\sigma$ to obtain an upper bound on a possible violation of the inequality $f(\rho)\leq f(\sigma)$
which is characterised by the difference  $f(\rho)-f(\sigma)$. It is natural also to try to improve this upper bound using the information about the distance between the states $\rho$ and $\sigma$.\smallskip

Our main technical tool for solving the above problems is the following

%The following proposition gives an universal way to solve the above mentioned tasks.

\begin{proposition}\label{pm} \emph{Let $f$ be a Schur concave function on the set $\,\S(\H)$ and $\rho$  be a state in $\,\S(\H)$  with  the spectral representation\footnote{Here and in what follows speaking about spectral representation (\ref{sprho}) we assume that $\{\varphi_i\}_{i=1}^{+\infty}$ is an orthonormal system of vectors in $\H$ and that some entries $p_i$ may be equal to zero (if $\rho$ is a finite rank state).}
\begin{equation}\label{sprho}
\rho=\sum_{i=1}^{+\infty} p_{i}|\varphi_i\rangle\langle \varphi_i|,\quad p_{i}\geq p_{i+1}\geq 0\; \textit{ for all }\; i,
\end{equation}
such that $f(\rho)$ is finite. Let
\begin{equation}\label{u-d}
U_\varepsilon(\rho)\doteq\left\{\sigma\in \S(\H)\,\left|\,\frac{1}{2}\|\rho-\sigma\|_1\leq\varepsilon\right.\right\}.
\end{equation}
For each  $\,m\,$ in $\,\N_0\doteq\{0\}\cup\N\,$ let
\begin{equation}\label{tm}
T_m(\rho)\doteq\left\{\left.\sum_{i=1}^{+\infty} q_{i}|\varphi_i\rangle\langle \varphi_i|\in\S(\H)\,\right|\,q_i= p_i,\, i=\overline{1,m},\;\,\textit{and}\;\; q_{m+1}\geq p_{m+1}\right\},
\end{equation}
where it is assumed that the condition $\,q_i= p_i,\, i=\overline{1,m}\,$ is omitted if $\;m=0$.}\footnote{We do not assume that $q_{i}\geq q_{i+1}$ for all $i$ in (\ref{tm}).}
\smallskip

\emph{Then
$$
\sup\left\{f(\rho)-f(\sigma)\,|\, \sigma \in U_\varepsilon(\rho)\right\}\leq \sup\left\{f(\rho)-f(\sigma)\,|\, \sigma \in T_0(\rho)\cap U_\varepsilon(\rho)\right\}
$$
and
$$
\sup\left\{f(\rho)-f(\sigma)\,\left|\, \sigma \in U_\varepsilon(\rho),\, \sigma\stackrel{\,m}{\prec}\rho \right.\right\}\leq \sup\left\{f(\rho)-f(\sigma)\,|\, \sigma \in T_{m}(\rho)\cap U_\varepsilon(\rho)\right\}
$$
for each natural $m$.}
\end{proposition}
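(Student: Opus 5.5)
The plan is to show that for every $\sigma$ in $U_\varepsilon(\rho)$ with $\sigma\stackrel{\,m}{\prec}\rho$ (no partial-majorization condition when $m=0$), there is a state $\omega\in T_m(\rho)\cap U_\varepsilon(\rho)$ with $\omega\succ\sigma$. Schur concavity then gives $f(\omega)\leq f(\sigma)$, so $f(\rho)-f(\sigma)\leq f(\rho)-f(\omega)$, and taking suprema yields both inequalities. Since $f$ depends only on the spectrum, we may replace $\sigma$ by the diagonal state $\sigma'=\sum_i q_i|\varphi_i\rangle\langle\varphi_i|$, where $q_i$ are the eigenvalues of $\sigma$ in non-increasing order. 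By Mirsky's inequality (\ref{Mirsky-ineq+}), $\frac12\|\rho-\sigma'\|_1=\frac12\sum_i|p_i-q_i|\leq\frac12\|\rho-\sigma\|_1\leq\varepsilon$, so $\sigma'\in U_\varepsilon(\rho)$. It therefore suffices to work with sequences $\{q_i\}$ that are non-increasing, satisfy $\sum_{i\le k}q_i\le\sum_{i\le k}p_i$ for $k\le m$, and satisfy $\frac12\sum|p_i-q_i|\le\varepsilon$.

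Next I construct $\omega=\sum_i r_i|\varphi_i\rangle\langle\varphi_i|$ as follows. Let $\delta=\sum_{i=1}^m(p_i-q_i)\geq0$, which is nonnegative by partial majorization. Set $r_i=p_i$ for $i\le m$ and $r_{m+1}=q_{m+1}+\delta$, and keep $r_i=q_i$ for $i>m+1$. The sum is then $1$, so $\omega$ is a state. For $m=0$ take $\omega=\sigma'$ when $q_1\ge p_1$; otherwise swap the roles: if $q_1<p_1$, set $r_1=p_1$ and reduce the tail mass. A cleaner uniform device for both cases is as follows. If $r_{m+1}\ge p_{m+1}$ we are done with this step. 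Otherwise we raise $r_{m+1}$ to $p_{m+1}$ by removing mass from the tail entries $i>m+1$ where $q_i>p_i$, which must exist since the total sums agree. Lowering entries toward $p_i$ never increases $\sum|p_i-r_i|$.

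The two remaining checks are the distance bound and majorization. For the distance, note that $\sum_{i\le m}|p_i-r_i|=0$, while the mass $\delta$ moved into slot $m+1$ changes $|p_{m+1}-r_{m+1}|$ by at most $\delta\le\sum_{i\le m}|p_i-q_i|$. Hence $\sum|p_i-r_i|\le\sum|p_i-q_i|\le2\varepsilon$, and the tail adjustment only decreases the tail terms. For majorization we must show $\omega\succ\sigma'$. This is the main obstacle, because $\{r_i\}$ need not be non-increasing, so we must compare the sorted partial sums of $r$ with those of $q$. I would argue via the characterization that $\omega\succ\sigma'$ whenever $\sigma'$ is obtained from $\omega$ by a sequence of Robin Hood (T-)transfers moving mass from larger to smaller entries. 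Alternatively, one shows directly that for each $k$ the sum of the $k$ largest $r_i$ is at least $\sum_{i\le k}r_i\ge\sum_{i\le k}q_i$. For $k\le m$ this holds because $r_i=p_i$ and partial majorization applies. For $k\ge m+1$ it holds because the first $m+1$ entries of $r$ carry total mass $\sum_{i\le m+1}q_i$ plus nonnegative transfers, while the tail transfers reduce only entries with index above $m+1$. Since the sum of the $k$ largest entries dominates any $k$-subset sum, the partial-sum inequalities for the unsorted first-$k$ sums suffice. I would verify these first-$k$ inequalities carefully in the tail-reduction case, choosing the tail entries to reduce from the far end so as to preserve them.
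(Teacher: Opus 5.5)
\textbf{The construction for natural $m$ is wrong.} Your overall strategy is the paper's own (Lemma~\ref{pm+}). You use Mirsky's inequality to pass to $\sigma'$, diagonal in $\{\varphi_i\}$, and then exhibit an element of $T_m(\rho)$ that majorizes $\sigma'$ and is no farther from $\rho$. The paper takes that element from the constructions of Lemmas 2A/2B in \cite{FCB}; you try to build it by hand. Your $m=0$ case is fine, since there $\delta=0$.

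For natural $m$, however, let $\delta=\sum_{i\le m}(p_i-q_i)\ge 0$. Replacing $q_1,\dots,q_m$ by $p_1,\dots,p_m$ already \emph{adds} mass $\delta$. Setting $r_{m+1}=q_{m+1}+\delta$ adds it a second time, so $\sum_i r_i=1+2\delta$, and $\omega$ is not a state whenever $\delta>0$. For example, $m=1$, $p=(0.5,0.3,0.2)$, $q=(0.4,0.4,0.2)$ gives $r=(0.5,0.5,0.2)$.

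You have the direction of the transfer reversed. Partial majorization says the first $m$ slots of $\sigma'$ are \emph{deficient}, so $\delta$ must be taken \emph{out of} the tail, not put into slot $m+1$. The later steps inherit the error:
\begin{itemize}
\item ``the total sums agree'' is false for your $r$;
\item the distance estimate talks about ``mass $\delta$ moved into slot $m+1$'';
\item the majorization step relies on the first $m+1$ entries carrying $\sum_{i\le m+1}q_i$ plus nonnegative transfers.
\end{itemize}

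\textbf{The fix uses your own tail-lowering device.} Start from $r_i=p_i$ for $i\le m$ and $r_i=q_i$ for $i>m$; this has total $1+\delta$, and $\sum_{i>m}(q_i-p_i)=\delta$.
\begin{itemize}
\item If $q_{m+1}\ge p_{m+1}$, remove $\delta$ by lowering entries $i\ge m+1$ with $q_i>p_i$ toward, but not below, $p_i$.
\item If $q_{m+1}<p_{m+1}$, then $\sum_{i\ge m+2}(q_i-p_i)=\delta+p_{m+1}-q_{m+1}$. Remove that amount from entries $i\ge m+2$ with $q_i>p_i$, and put $p_{m+1}-q_{m+1}$ into slot $m+1$.
\end{itemize}
In both cases every entry moves toward $p_i$, so $|p_i-r_i|\le|p_i-q_i|$ termwise and the distance bound follows.

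Majorization is then immediate, with no need for a ``far end'' choice. We have $\sum_i r_i=\sum_i q_i=1$ and $r_i\le q_i$ for all $i\ge m+2$. So for $k\ge m+1$, $\sum_{i\le k}(r_i-q_i)=\sum_{i>k}(q_i-r_i)\ge 0$. For $k\le m$ the inequality is the partial majorization itself. Since $\{q_i\}$ is non-increasing and the sum of the $k$ largest $r_i$ dominates $\sum_{i\le k}r_i$, this gives $\omega\succ\sigma'$, and the rest of your argument goes through.
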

\smallskip

Both claims of Proposition \ref{pm} follow from the corresponding  claims of Lemma \ref{pm+} below.\smallskip

\textbf{Note:} It is clear that
$$
T_0(\rho)\supseteq T_1(\rho)\supseteq T_2(\rho)\supseteq...\quad\textrm{ and }\quad \bigcap_{m=0}^{+\infty} T_m(\rho)=\{\rho\}.
$$
If $\,\rank\rho=n<+\infty\,$ then $\,T_m(\rho)=\{\rho\}\,$ for all $\,m\geq n-1$, if $\,\rank\rho=+\infty\,$ then $\,T_m(\rho)\neq\{\rho\}\,$ for all $m$.\smallskip\pagebreak

\begin{lemma}\label{pm+} \emph{Let
$\,\rho\,$ be a state in $\,\S(\H)$ with the spectral representation (\ref{sprho}). Let $\,T_m(\rho)$, $m=0,1,2,..,$ be the sets defined in (\ref{tm}).}\smallskip

\emph{If  $\,\sigma\,$ is an  arbitrary state in $\,\S(\H)$ then there exists a state $\sigma_*\in T_0(\rho)$ such that}
\begin{equation}\label{pm++}
\sigma_*\succ\sigma\quad\textit{ and } \quad \|\rho-\sigma_*\|_1\leq\|\rho-\sigma\|_1.
\end{equation}

\emph{If  $\,\sigma\,$ is a  state in $\,\S(\H)$ such that $\,\sigma\stackrel{\,m}{\prec}\rho\,$ for some natural $m$ then there exists a state $\sigma_*\in T_{m}(\rho)$ such that both relations in (\ref{pm++}) hold.}
\end{lemma}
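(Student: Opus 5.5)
The plan is to reduce to states diagonal in the basis $\{\varphi_i\}$ and then redistribute eigenvalues explicitly. Both claims are handled by one construction; the case $m=0$ is the special case with no fixed coordinates.

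\emph{Step 1 (diagonalization).} Let $\{q_i\}$ be the eigenvalues of $\sigma$ in non-increasing order, and put $\sigma_0=\sum_i q_i|\varphi_i\rangle\langle\varphi_i|$. Then $\sigma_0$ and $\sigma$ have the same spectrum, so $\sigma_0\succ\sigma$, and $\sigma_0\stackrel{\,m}{\prec}\rho$ whenever $\sigma\stackrel{\,m}{\prec}\rho$. By the Mirsky inequality (\ref{Mirsky-ineq+}),
$$\|\rho-\sigma_0\|_1=\sum_i|p_i-q_i|\leq\|\rho-\sigma\|_1 .$$
It therefore suffices to modify the sequence $\{q_i\}$ into a sequence $\{q^*_i\}$ and to take $\sigma_*=\sum_i q^*_i|\varphi_i\rangle\langle\varphi_i|$. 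We need three properties: $\sigma_*\in T_m(\rho)$; $\sum_i|p_i-q^*_i|\leq\sum_i|p_i-q_i|$; and $q^*\succ q$. The last one gives $\sigma_*\succ\sigma_0$, hence $\sigma_*\succ\sigma$, since only the spectrum matters.

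\emph{Step 2 (construction).} Write $P_k=\sum_{i\le k}p_i$ and $Q_k=\sum_{i\le k}q_i$, and let $D=P_m-Q_m\geq0$ (with $D=0$ if $m=0$).
\begin{itemize}
\item Set $q^*_i=p_i$ for $i\le m$.
\item Set $e=\max\{0,\,p_{m+1}-q_{m+1}\}$ and raise the $(m+1)$-th entry to $q_{m+1}+e$.
\item Remove total mass $D+e$ from entries $i\geq m+1$ with $q_i>p_i$. Each such entry is lowered toward $p_i$ but never below it. If $q_{m+1}>p_{m+1}$, the $(m+1)$-th entry may itself be lowered, but not below $p_{m+1}$.
\end{itemize}
This keeps $q^*_{m+1}\geq p_{m+1}$ and total mass $1$, so $\sigma_*\in T_m(\rho)$.

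Enough excess must be available. Since
$$\sum_{i\ge m+1}(q_i-p_i)=P_m-Q_m=D,$$
we get $\sum_{i\ge m+1}(q_i-p_i)_+\geq D+(p_{m+1}-q_{m+1})_-$. The excess at indices $i\ge m+2$ alone is at least $D+e$ when $e>0$, because $\sum_{i\ge m+2}(q_i-p_i)=D+p_{m+1}-q_{m+1}=D+e$. When $e=0$, the excess over $i\ge m+1$ is at least $D$. Getting this bookkeeping right, so that removal never pushes an entry below $p_i$, is the main technical point.

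\emph{Step 3 (distance).} The terms $i\le m$ become $0$. Every other change moves $q_i$ monotonically toward $p_i$ without crossing it, including the raise of the $(m+1)$-th entry up to at most $p_{m+1}$. Hence each $|p_i-q^*_i|\le|p_i-q_i|$.

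\emph{Step 4 (majorization).} First consider the unsorted partial sums $Q^*_k=\sum_{i\le k}q^*_i$.
\begin{itemize}
\item For $k\le m$, $Q^*_k=P_k\geq Q_k$ by the partial majorization hypothesis.
\item For $k>m$, $Q^*_k=Q_k+D+e-(\text{mass removed from indices }m+1,\dots,k)\geq Q_k$, because the total removed is exactly $D+e$.
\end{itemize}
The sum of the $k$ largest entries of $q^*$ is at least $Q^*_k$, and $Q_k$ is already the sorted partial sum for $q$. Hence $q^*\succ q$, which completes both claims of the lemma (the case $m=0$ being $D=0$).
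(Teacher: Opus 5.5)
Your proof is correct and follows the same route as the paper: the Mirsky inequality reduces everything to the state $\sigma_0$ diagonal in $\{\varphi_i\}$, and the ordered spectrum $\{q_i\}$ is then redistributed so that the first $m$ entries equal $p_i$, the $(m+1)$-th entry is at least $p_{m+1}$, and every entry moves toward $p_i$ without crossing it. The only difference is that you construct and verify this redistribution yourself (including $q^*\succ q$ via the unsorted partial sums $Q^*_k\geq Q_k$), whereas the paper takes it from the constructions in Lemmas 2A and 2B of \cite{FCB}; there is also one notational slip in your feasibility estimate, where $(p_{m+1}-q_{m+1})_-$ should read $(p_{m+1}-q_{m+1})_+=e$, which is the quantity you actually use afterwards.
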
\smallskip

\emph{Proof.}  Assume that $\sigma$ is an arbitrary state in $\,\S(\H)$ with the spectral representation
$$
\sigma=\sum_{i=1}^{+\infty} q_{i}|\psi_i\rangle\langle \psi_i|
$$
such that $\,q_{i}\geq q_{i+1}\geq0\,$ for all $i$.

By applying Lemma 2A in \cite{FCB} to the probability distributions $\{p_{i}\}_{i=1}^{+\infty}$ and $\{q_{i}\}_{i=1}^{+\infty}$
we obtain a probability distribution $\{q^*_{i}\}_{i=1}^{+\infty}$ such that
\begin{equation*}%\label{3-c}
q^*_1\geq p_1\quad\textup{and}\quad  \sum_{i=1}^{+\infty}|p_i-q^*_{i}| \leq \sum_{i=1}^{+\infty}|p_i-q_{i}| \leq\|\rho-\sigma\|_1,
\end{equation*}
where the last inequality follows from the Mirsky inequality (\ref{Mirsky-ineq+}).
Moreover, the explicit construction of the distribution $\{q^*_{i}\}_{i=1}^{+\infty}$ given in the  proof Lemma 2A in \cite{FCB} shows that
\begin{equation}\label{mr=++}
\{q^*_{i}\}_{i=1}^{+\infty}\succ\{q_{i}\}_{i=1}^{+\infty},
\end{equation}
where $"\succ"$ is the standard  majorization order for probability distributions. It is clear  that the state
\begin{equation}\label{star}
\sigma_*=\sum_{i=1}^{+\infty} q^*_{i}|\varphi_i\rangle\langle \varphi_i|
\end{equation}
belongs to the set $T_0(\rho)$, $\,\sigma_*\succ\sigma\,$  and $\,\|\rho-\sigma_*\|_1 \leq\|\rho-\sigma\|_1$.

If $\,\sigma\,$ is a state such that $\,\sigma\stackrel{\,m}{\prec}\rho\,$ then we may repeat the above arguments by applying the construction from the proof Lemma 2B in \cite{FCB} to the probability distributions $\{p_{i}\}_{i=1}^{+\infty}$ and $\{q_{i}\}_{i=1}^{+\infty}$, since in this case all the inequalities in (\ref{m-cond++}) hold. As a result, we obtain a probability distribution $\{q^*_{i}\}_{i=1}^{+\infty}$ such that (\ref{mr=++}) holds,
\begin{equation*}%\label{3-c+}
q^*_i=p_i,\;\;\forall i=\overline{1,m},\quad q^*_{m+1}\geq p_{m+1} \quad\textup{and}\quad  \sum_{i=1}^{+\infty}|p_i-q^*_{i}| \leq \sum_{i=1}^{+\infty}|p_i-q_{i}| \leq\|\rho-\sigma\|_1,
\end{equation*}
where the last inequality follows from the Mirsky inequality (\ref{Mirsky-ineq+}).

In this case the state $\sigma_*$ defined in (\ref{star}) belongs to the set $T_{m}(\rho)$, $\,\sigma_*\succ\sigma\,$  and $\,\|\rho-\sigma_*\|_1 \leq\|\rho-\sigma\|_1$. $\Box$\smallskip

The condition of finiteness of $f(\rho)$ in Proposition \ref{pmc} can be omitted by reformulating
its claims. In fact, Lemma \ref{pm+} implies the following\smallskip

\begin{proposition}\label{pmc} \emph{Let $f$ be a Schur concave function on the set $\,\S(\H)$. Let
$\,\rho$ be a state in $\,\S(\H)$ with the spectral representation (\ref{sprho}). Then
$$
\inf\left\{f(\sigma)\,|\, \sigma \in U_\varepsilon(\rho)\right\}\geq \inf\left\{f(\sigma)\,|\, \sigma \in T_0(\rho)\cap U_\varepsilon(\rho)\right\}
$$
and
$$
\inf\left\{f(\sigma)\left|\, \sigma \in U_\varepsilon(\rho),\, \sigma\stackrel{\,m}{\prec}\rho \right.\right\}\geq \inf\left\{f(\sigma)\,|\, \sigma \in T_{m}(\rho)\cap U_\varepsilon(\rho)\right\}
$$
for each natural $\,m$, where $U_\varepsilon(\rho)$ and $\,T_m(\rho)$ are the sets defined in (\ref{u-d}) and (\ref{tm}).}
\end{proposition}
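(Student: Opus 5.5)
The plan is to derive both inequalities of Proposition \ref{pmc} directly from Lemma \ref{pm+} by a pointwise comparison argument. This avoids any use of finiteness of $f(\rho)$: we only compare values $f(\sigma)$ and $f(\sigma_*)$ in the extended real line. It then suffices to show that for every admissible $\sigma$ on the left-hand side there is an admissible $\sigma_*$ on the right-hand side with $f(\sigma_*)\leq f(\sigma)$. Taking infima then gives the claim, including the degenerate cases where some infimum equals $-\infty$ or the left set is empty (then the left infimum is $+\infty$).

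For the first inequality, take any $\sigma\in U_\varepsilon(\rho)$. The first part of Lemma \ref{pm+} gives $\sigma_*\in T_0(\rho)$ with $\sigma_*\succ\sigma$ and $\|\rho-\sigma_*\|_1\leq\|\rho-\sigma\|_1\leq 2\varepsilon$. Hence $\sigma_*\in T_0(\rho)\cap U_\varepsilon(\rho)$. Schur concavity of $f$ applied to $\sigma_*\succ\sigma$ gives $f(\sigma_*)\leq f(\sigma)$. Therefore
$$
\inf\left\{f(\sigma')\,|\,\sigma'\in T_0(\rho)\cap U_\varepsilon(\rho)\right\}\leq f(\sigma),
$$
and taking the infimum over $\sigma\in U_\varepsilon(\rho)$ yields the first claim.

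For the second inequality, fix a natural $m$ and take $\sigma\in U_\varepsilon(\rho)$ with $\sigma\stackrel{\,m}{\prec}\rho$. The second part of Lemma \ref{pm+} produces $\sigma_*\in T_m(\rho)$ satisfying both relations in (\ref{pm++}). As before, $\sigma_*\in T_m(\rho)\cap U_\varepsilon(\rho)$ and $f(\sigma_*)\leq f(\sigma)$. Taking infima finishes the proof.

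There is no real obstacle, since all the work is already contained in Lemma \ref{pm+}. The one point requiring care is that Schur concavity is used only in the form of the implication $\sigma_*\succ\sigma\Rightarrow f(\sigma_*)\leq f(\sigma)$. This holds for values in $[-\infty,+\infty]$ without any subtraction, which is exactly why no finiteness assumption on $f(\rho)$ is needed here, in contrast to Proposition \ref{pm}.
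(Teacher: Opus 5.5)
Your proof is correct and follows the paper's route: the paper derives this proposition directly from Lemma \ref{pm+}. Each admissible $\sigma$ is replaced by the state $\sigma_*$ in $T_0(\rho)\cap U_\varepsilon(\rho)$ (resp.\ $T_m(\rho)\cap U_\varepsilon(\rho)$) with $\sigma_*\succ\sigma$, and Schur concavity then gives $f(\sigma_*)\leq f(\sigma)$ with no subtraction involved, which is exactly why finiteness of $f(\rho)$ can be dropped.
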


\section{Main theorem}

Let $\rho$  be a state in $\,\S(\H)$  with  the spectral representation
\begin{equation}\label{sprho+}
\rho=\sum_{i=1}^{n}p_{i}|\varphi_i\rangle\langle \varphi_i|,\quad p_i\geq p_{i+1}>0\;\; \forall i,\quad n\leq +\infty.
\end{equation}
Let $\,d_k\doteq 1-\sum_{i=1}^{k}p_k$, $\,k\in\N\cap[1,n+1)$.\smallskip

For each $\,m\in\{0\}\cup\N\,$ and $\,\varepsilon\in[0,1]\,$ we define the  state\footnote{The definition of the state $\rho_{m,\varepsilon}$ is motivated by the construction of the state $\rho_{*,\varepsilon}(\sigma)$ in \cite[Section 4.3]{H&D}: the state $\rho_{0,\varepsilon}$ coincides with the state $\rho_{*,\varepsilon}(\sigma)$ with $\sigma=\rho$ provided that $n<+\infty$.}
\begin{equation}\label{rho-me}
\rho_{m,\varepsilon}=\left\{\begin{array}{ll}
        \displaystyle\sum_{i=1}^{n}p^{m,\varepsilon}_{i}|\varphi_i\rangle\langle \varphi_i| &\quad\textrm{if}\;\;  m<n-1\;\textrm{ and }\;\varepsilon>0\\
        \rho &\quad\textrm{if either}\;\;  n<+\infty\;\; \textrm{and}\;\; m\geq n-1\;\textrm{ or }\;\varepsilon=0
\end{array}\right.\!,
\end{equation}
where $\{p^{m,\varepsilon}_{i}\}_{i=1}^{n}$ is the probability distribution defined as:
\begin{itemize}
  \item if $\,\varepsilon\geq d_{m+1}$ then
  \begin{equation}\label{f-1}
  p^{m,\varepsilon}_{i}=\left\{\begin{array}{ll}
        p_i &\textrm{if}\;\;  i\leq m\\
        d_{m} &\textrm{if}\;\;  i=m+1\\
        0 &\textrm{if}\;\;  i>m+1
       \end{array}\right.
  \end{equation}

  \item if $\,n<+\infty$ and $\,\varepsilon\leq p_n$ then
  \begin{equation}\label{f-2}
  p^{m,\varepsilon}_{i}=\left\{\begin{array}{ll}
        p_i &\textrm{if either}\;\;  i\leq m\;\;  \textrm{or} \;\;m+1<i<n \\
        p_{m+1}+\varepsilon &\textrm{if}\;\;  i=m+1\\
        p_n-\varepsilon &\textrm{if}\;\;  i=n
       \end{array}\right.
  \end{equation}
  \item if $\,\varepsilon<d_{m+1}$ and either $\,n=+\infty\,$ or $\,\varepsilon>p_n$ then
  \begin{equation}\label{f-3}
  p^{m,\varepsilon}_{i}=\left\{\begin{array}{ll}
        p_i &\textrm{if either}\;\;  i\leq m\;\; \textrm{or} \;\; m+1<i<\ell_\varepsilon\\
        p_{m+1}+\varepsilon &\textrm{if}\;\;  i=m+1\\
        p_{\ell_\varepsilon}-\varepsilon+d_{\ell_\varepsilon} &\textrm{if}\;\;  i=\ell_\varepsilon\\
        0 &\textrm{if}\;\;  i>\ell_\varepsilon
       \end{array}\right.,
  \end{equation}
where $\ell_\varepsilon=\min\{k\in \N\,|\, d_k\leq \varepsilon\}$ ($\ell_\varepsilon>m+1\,$ because $\,\varepsilon<d_{m+1}$).
\end{itemize}\smallskip

Now we can formulate the main result of this article.\smallskip

\begin{theorem}\label{main}
\emph{Let $f$ be a Schur concave function on the set $\,\S(\H)$. Let
$\rho$ be a state in $\,\S(\H)$ with the spectral representation (\ref{sprho+}) such that $f(\rho)$ is finite. Then
\begin{equation}\label{main+}
\sup\left\{f(\rho)-f(\sigma)\,\left|\, \sigma \in U_\varepsilon(\rho),\, \sigma\stackrel{\,m}{\prec}\rho \right.\right\}\leq f(\rho)-f(\rho_{m,\varepsilon})
\end{equation}
for each natural $\,m\,$ and $\,\varepsilon\in[0,1]$, where $\rho_{m,\varepsilon}$ is the state defined in (\ref{rho-me})  and $U_\varepsilon(\rho)$ is the $\varepsilon$-vicinity of the state $\rho$ defined in (\ref{u-d}).}

\emph{Inequality (\ref{main+}) is optimal in the following sense: for any $\,m\in\N\,$  and  $\,\varepsilon\in[0,1]\,$ there is a state $\rho$ such that
$\,\rho_{m,\varepsilon}\stackrel{\,m}{\prec}\rho\,$ and hence an equality holds in (\ref{main+}).}

\emph{The r.h.s. of (\ref{main+}) is a nondecreasing  function of $\,\varepsilon\,$ for each $\,m\in\N\,$ and a non-increasing function of $\,m\,$ for each $\,\varepsilon\in[0,1]$. It tends to zero as
\begin{equation}\label{con}
\min\left\{\varepsilon,\frac{1}{m}\right\}\to 0
\end{equation}
provided that one of following conditions hold:
\begin{enumerate}
  \item [$\rm a)$] the function $f$ is lower semicontinuous  on $\,\S(\H)$;
  \item [$\rm b)$] the limit relation
$$
\liminf_{n\to+\infty}f(\vartheta_n)\geq f(\vartheta_0)
$$
holds for any sequence $\{\vartheta_n\}\subset\S(\H)$ converging to a state $\vartheta_0\in\S(\H)$  with finite $f(\vartheta_0)$ such that
$\vartheta_n\succ\vartheta_0$ for all $\,n$.
\end{enumerate}}

\end{theorem}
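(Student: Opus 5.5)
The plan is to reduce everything to Proposition \ref{pm}. By that proposition it is enough to show that $\rho_{m,\varepsilon}\prec\sigma$ for every $\sigma\in T_m(\rho)\cap U_\varepsilon(\rho)$. Schur concavity then gives $f(\rho)-f(\sigma)\le f(\rho)-f(\rho_{m,\varepsilon})$.

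So take $\sigma=\sum_i q_i|\varphi_i\rangle\langle\varphi_i|$ with $q_i=p_i$ for $i\le m$, $q_{m+1}\ge p_{m+1}$, and $\frac12\sum_i|p_i-q_i|\le\varepsilon$ (the latter holds since both operators are diagonal in the same basis). The first step is to note that the total excess $\delta=\sum_i(q_i-p_i)_+$ is at most $\varepsilon$, and that it is also at most $d_m-p_{m+1}$, because the entries $q_{m+1},q_{m+2},\dots$ must sum to $d_m$. The second step is to compare partial sums of the non-increasing rearrangement $q^\downarrow$ with those of $p^{m,\varepsilon}$.

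To get these comparisons, I use that $\sum_{i\le k}q^\downarrow_i$ is the maximum of the sum of $q$ over $k$-element index sets. For any such set $J$ we have $\sum_J q_i\le\sum_J p_i+\delta$, and also $\sum_J q_i\le 1-\sum_{i\notin J}q_i$. I then split according to the three regimes in (\ref{f-1})--(\ref{f-3}).

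\textbf{Case (\ref{f-1}).} Here $p^{m,\varepsilon}$ has $m+1$ nonzero entries, $p_1,\dots,p_m,d_m$. For $k\le m$ its partial sums equal those of $p$, and for $k\ge m+1$ they equal $1$. The inequality $\sum_{i\le k}q^\downarrow_i\le\sum_{i\le k}p_i$ for $k\le m$ needs care: the top $k$ of $q$ may include index $m+1$ or later ones. I would bound $\sum_J q_i\le \sum_{J\cap[1,m]}p_i+\sum_{J\setminus[1,m]}q_i$ and use $\sum_{i>m}q_i=d_m$. This shows the $q$'s beyond $m$ carry total mass $d_m$. However, $d_m$ could exceed $p_k$, so this particular comparison is not obviously true. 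I expect the intended argument is that $\rho_{m,\varepsilon}$ is the state majorizing all of $T_m(\rho)\cap U_\varepsilon(\rho)$ after reordering: its spectrum is $\{p_1,\dots,p_m,d_m\}$ sorted, and majorization of $\sigma$ by this sorted vector follows from the top-$k$ characterization combined with the mass constraint.

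\textbf{Cases (\ref{f-2}) and (\ref{f-3}).} In $p^{m,\varepsilon}$ the mass $\varepsilon$ is moved onto index $m+1$, taken from the tail: from $p_n$ in (\ref{f-2}), and by zeroing indices beyond $\ell_\varepsilon$ in (\ref{f-3}). For a set $J$ of size $k$, the bound $\sum_J q_i\le\sum_J p_i+\min(\delta,\dots)$ combined with the fact that the smallest-index $k$ entries of $p$ dominate gives the upper partial sums. The "remove from the tail" structure ensures that the sums $\sum_{i\le k}p^{m,\varepsilon,\downarrow}_i$ are the maximal ones compatible with the constraints. This step, verifying $q^\downarrow$ partial sums are dominated uniformly over all $k$ once the entries are sorted, is the main obstacle. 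I would handle it via the standard fact that, among vectors at $\ell_1$-distance $\le2\varepsilon$ from $p$, the maximal one in majorization order adds $\varepsilon$ to the largest free entry and removes $\varepsilon$ from the smallest entries. This is exactly Lemma 2 of \cite{FCB} and the construction in \cite{H&D}, here applied with the first $m$ coordinates frozen.

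\textbf{Optimality.} I would check that $\rho_{m,\varepsilon}\in T_m(\rho)\cap U_\varepsilon(\rho)$, which is immediate from (\ref{f-1})--(\ref{f-3}). When $p_m\ge p_{m+1}+\varepsilon$ (or $\ge d_m$ in case (\ref{f-1})), the distribution is ordered, so $\rho_{m,\varepsilon}\stackrel{\,m}{\prec}\rho$. It then suffices to pick $\rho$ with $p_m$ large compared with the later entries.

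\textbf{Monotonicity.} Increasing $\varepsilon$ or decreasing $m$ enlarges $T_m(\rho)\cap U_\varepsilon(\rho)$. More directly, $\rho_{m,\varepsilon'}\succ\rho_{m,\varepsilon}$ for $\varepsilon'\ge\varepsilon$, and $\rho_{m,\varepsilon}\succ\rho_{m+1,\varepsilon}$; both follow by comparing partial sums.

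\textbf{Vanishing.} The states $\rho_{m,\varepsilon}$ converge to $\rho$ in trace norm as $\varepsilon\to0$, since the distance is at most $2\varepsilon$. They also converge as $m\to\infty$, since the distance is at most $2d_m\to0$. Moreover $\rho_{m,\varepsilon}\succ\rho$. Condition b), and a fortiori a), therefore gives $\liminf f(\rho_{m,\varepsilon})\ge f(\rho)$. Combined with $f(\rho_{m,\varepsilon})\le f(\rho)$, which follows from Schur concavity, this yields the limit.
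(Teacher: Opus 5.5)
Your outline is the paper's. You reduce via Proposition~\ref{pm} to the claim that $\rho_{m,\varepsilon}\in T_m(\rho)\cap U_\varepsilon(\rho)$ majorizes every element of that set (Lemma~\ref{lim}). Optimality comes from taking $p_m\ge d_m$ (or $p_m\ge p_{m+1}+\varepsilon$), monotonicity from the nesting of the sets $T_m(\rho)\cap U_\varepsilon(\rho)$, and the limit from trace-norm convergence, $\rho_{m,\varepsilon}\succ\rho$ and condition b). Your first sentence states the needed relation backwards: you need $\sigma\prec\rho_{m,\varepsilon}$, since with $\rho_{m,\varepsilon}\prec\sigma$ Schur concavity gives $f(\sigma)\le f(\rho_{m,\varepsilon})$. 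More importantly, the step everything rests on is never proved. This includes $\rho_{m,\varepsilon}\succ\rho$ in your vanishing argument and the identification of $f(\rho_{m,\varepsilon})$ as the infimum in the monotonicity argument. You call it the main obstacle and defer to a ``standard fact with the first $m$ coordinates frozen'', but no cited result says that. Lemmas 2A/2B of \cite{FCB} give the construction of $\{q^*_i\}$ used in Lemma~\ref{pm+}, and \cite{H&D} treats only the unconstrained case $m=0$. So the proposal has a genuine gap exactly at Lemma~\ref{lim}.

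Your diagnosis of the difficulty is correct and sharper than the paper's. The paper compares $q^{\downarrow}$ with the \emph{unsorted} partial sums of $p^{m,\varepsilon}$ and asserts $\sum_{i\le k}p^{m,\varepsilon}_i=\sum_{i\le k}p_i+\varepsilon$ for all $k<\ell_\varepsilon$. That fails for $k\le m$: for $p=(0.3,0.25,0.25,0.2)$, $m=1$, $\varepsilon=0.1$, the state $\sigma=\rho_{1,\varepsilon}$ itself has $q^{\downarrow}_1=0.35>0.3=p^{1,\varepsilon}_1$. The missing idea is that the frozen coordinates absorb no excess. Take a $k$-element index set $J=J_1\cup J_2$ with $J_1\subseteq\{1,\dots,m\}$, $J_2\subseteq\{m+1,m+2,\dots\}$ and $|J_2|=j\ge1$. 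Using $q_i=p_i$ on $J_1$, $\sum_i[q_i-p_i]_+\le\varepsilon$ and $\sum_{i>m}q_i=d_m$, you get
\[
\sum_{i\in J}q_i\le\sum_{i=1}^{k-j}p_i+\min\Bigl\{\sum_{i=m+1}^{m+j}p_i+\varepsilon,\;d_m\Bigr\}=\sum_{i\in J'}p^{m,\varepsilon}_i,\qquad J'=\{1,\dots,k-j\}\cup\{m+1,\dots,m+j\}.
\]
The equality is read off from (\ref{f-1})--(\ref{f-3}) using $\sum_{i=m+1}^{m+j}p_i=d_m-d_{m+j}$ and the definition of $\ell_\varepsilon$. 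For $j=0$ the bound $\sum_J q_i\le\sum_{i\le k}p_i$ is trivial. Maximizing over $J$ gives $\sum_{i\le k}q^{\downarrow}_i\le\sum_{i\le k}(p^{m,\varepsilon})^{\downarrow}_i$ for every $k$, in all three cases at once. Your ``freeze'' route can also be made rigorous. Apply the $m=0$ result, rescaled to total mass $d_m$, to the tail $(q_{m+1},q_{m+2},\dots)$, and then use that prepending the common block $(p_1,\dots,p_m)$ preserves majorization. Both facts would have to be stated and proved, though, and the infinite-rank version of the $m=0$ result is only asserted in a footnote of the paper.
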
\smallskip

\begin{remark}\label{dh}
If we assume that  $\sigma\stackrel{0}{\prec}\rho$ holds trivially for all states $\rho$ and $\sigma$ then
inequality (\ref{main+}) remains valid for $\,m=0$. In this case it means that
\begin{equation*}%\label{main++}
\sup\left\{f(\rho)-f(\sigma)\,\left|\, \sigma\in U_\varepsilon(\rho)\right.\right\}\leq f(\rho)-f(\rho_{0,\varepsilon}).
\end{equation*}
In fact, the second claim of Corollary 3.2 in \cite{H&D} shows that \emph{an equality holds in this inequality} for any state $\,\rho\,$ and $\,\varepsilon\in[0,1]\,$
because the state $\,\rho_{0,\varepsilon}\,$ coincides with the state $\,\rho_{*,\varepsilon}(\sigma)\,$ with $\,\sigma=\rho\,$ constructed in \cite{H&D} (as mentioned before).\footnote{Strictly speaking,
the reference on Corollary 3.2 in \cite{H&D} is valid only in the case $\,n<+\infty$, but it is easy to upgrade
the arguments from \cite{H&D} to show that the state $\rho_{0,\varepsilon}$ majorizes
all the states in $U_\varepsilon(\rho)$ in the case  $\,n=+\infty\,$  as well.}
\end{remark}
\medskip

\emph{Proof.} Inequality (\ref{main+}) follows from Proposition \ref{pm} and Lemma \ref{lim} below, since
this lemma  implies that
$$
f(\sigma)\geq f(\rho_{m,\varepsilon})\quad \forall\sigma\in  T_{m}(\rho)\cap U_\varepsilon(\rho),
$$
where $U_\varepsilon(\rho)$ and $T_{m}(\rho)$ are the sets defined in (\ref{u-d}) and (\ref{tm}).

To prove the optimality of inequality (\ref{main+}) we have, for given  $m\in\N$  and  $\varepsilon\in(0,1]$, to construct a state $\rho$ with
the spectral representation (\ref{sprho+}) such that $\,\rho_{m,\varepsilon}\stackrel{\,m}{\prec}\rho\,$. If $\varepsilon<1$ then this can be done in two different ways:
\begin{itemize}
  \item to take the sequence $\{p_i\}_{i=1}^n$ such that $\,p_m\geq p_{m+1}+\varepsilon\,$ and $\,d_{m+1}>\varepsilon$, where
$$
d_{m+1}=1-p_1-p_2-...-p_{m+1};
$$
  \item to take the sequence $\{p_i\}_{i=1}^n$ such that $\,p_m\geq p_{m+1}+d_{m+1}\,$ and $\,d_{m+1}\leq\varepsilon$.
\end{itemize}
If $\,\varepsilon=1\,$ then the second of the above ways should be used.\medskip

To show that $f(\rho)-f(\rho_{m,\varepsilon})\leq f(\rho)-f(\rho_{m',\varepsilon'})\,$  for $m\geq m'$ and $\varepsilon\leq \varepsilon'$  note that
\begin{itemize}
  \item $f(\rho_{m,\varepsilon})=\inf\{f(\sigma)\,|\,\sigma\in T_m(\rho)\cap U_{\varepsilon}(\rho)\}$ for any  $m$ and $\varepsilon$ by the above proof;
  \item $T_m(\rho)\cap U_{\varepsilon}(\rho)\subseteq T_m(\rho')\cap U_{\varepsilon'}(\rho)\,$ for $m\geq m'$ and $\varepsilon\leq \varepsilon'$.
\end{itemize}

To prove the last claim of the theorem it suffices to note that
\begin{itemize}
  \item the state $\rho_{m,\varepsilon}$ tends to the state $\rho$ with the convergence (\ref{con}) w.r.t. the trace norm;
  \item the state $\rho_{m,\varepsilon}$ majorizes the state $\rho$ for any $\,m\in\N\,$  and  $\,\varepsilon\in(0,1]$;
\end{itemize}
$\Box$

\begin{lemma}\label{lim}  \emph{The state $\rho_{m,\varepsilon}$ belongs to the set $\,T_{m}(\rho)\cap U_\varepsilon(\rho)$  for each  $\,m\in\N\,$  and  $\,\varepsilon\in[0,1]\,$ and}
\begin{equation}\label{max}
\rho_{m,\varepsilon}\succ \sigma \quad \forall\sigma\in  T_{m}(\rho)\cap U_\varepsilon(\rho).
\end{equation}
\end{lemma}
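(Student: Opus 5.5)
The plan is to reduce everything to a statement about probability sequences written in the fixed basis $\{\varphi_i\}$. Each $\sigma\in T_m(\rho)\cap U_\varepsilon(\rho)$ is diagonal in that basis, say $\sigma=\sum_i q_i|\varphi_i\rangle\langle\varphi_i|$, with $q_i=p_i$ for $i\le m$, $q_{m+1}\ge p_{m+1}$ and $\frac12\sum_i|p_i-q_i|\le\varepsilon$. The trace norm of a difference of commuting operators is the $\ell_1$ distance of their eigenvalue sequences, which is why the last condition has this form. Both $\sigma$ and $\rho_{m,\varepsilon}$ are diagonal here, so (\ref{max}) becomes a majorization inequality between the sequences $\{p^{m,\varepsilon}_i\}$ and $\{q_i\}$. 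We may assume $\varepsilon>0$ and $m<n-1$, since otherwise $T_m(\rho)\cap U_\varepsilon(\rho)=\{\rho\}=\{\rho_{m,\varepsilon}\}$.

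\textbf{Step 1: membership.} I would check $\rho_{m,\varepsilon}\in T_m(\rho)\cap U_\varepsilon(\rho)$ separately in each of the three cases (\ref{f-1})--(\ref{f-3}). The conditions $p^{m,\varepsilon}_i=p_i$ for $i\le m$ and $p^{m,\varepsilon}_{m+1}\ge p_{m+1}$ are immediate. The distance follows from the fact that mass $\delta$ is moved from the tail onto index $m+1$, so $\frac12\|\rho-\rho_{m,\varepsilon}\|_1=\delta$:
\begin{itemize}
\item in case (\ref{f-1}), $\delta=d_{m+1}\le\varepsilon$;
\item in case (\ref{f-2}), $\delta=\varepsilon$;
\item in case (\ref{f-3}), the removed mass is $d_{\ell_\varepsilon-1}-(p_{\ell_\varepsilon}-\varepsilon+d_{\ell_\varepsilon})=\varepsilon$, and $0\le p_{\ell_\varepsilon}-\varepsilon+d_{\ell_\varepsilon}\le p_{\ell_\varepsilon}$ by the definition of $\ell_\varepsilon$.
\end{itemize}

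\textbf{Step 2: majorization.} Since $\{p^{m,\varepsilon}_i\}$ need not be non-increasing, I would compare Ky Fan sums, i.e. show for every $k$ that
\[
\text{(sum of the $k$ largest $q_i$)}\;\le\;\text{(sum of the $k$ largest $p^{m,\varepsilon}_i$)}.
\]
Write $q_{m+1}=p_{m+1}+a$ with $a\ge0$. The total excess mass $\sum_i(q_i-p_i)_+$ is at most $\varepsilon$, so $a\le\varepsilon$, and also $a\le d_{m+1}$ because the tail $i>m+1$ is nonnegative. Hence $a\le\min\{\varepsilon,d_{m+1}\}$, which is exactly the increment placed on index $m+1$ in $\rho_{m,\varepsilon}$.

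For a $k$-element index set $I$ I would bound $\sum_{i\in I}q_i$ by splitting it as
\[
\sum_{i\in I}q_i=\sum_{i\in I}p_i+\sum_{i\in I}(q_i-p_i).
\]
The second sum is at most the total excess $\varepsilon$, and it is also at most $1-\sum_{i\in I}p_i$. Therefore
\[
\sum_{i\in I}q_i\le\min\Bigl\{\sum_{i\in I}p_i+\varepsilon,\;1\Bigr\}\le\min\{P_k+\varepsilon,1\},
\]
where $P_k=\sum_{i\le k}p_i$.

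The key claim is that the sum of the $k$ largest entries of $\{p^{m,\varepsilon}_i\}$ is at least $\min\{P_k+\varepsilon,1\}$ whenever $k\ge m+1$. For $k\le m$ this bound is too weak, and I would argue directly instead. Each $q_i$ with $i>m+1$ is at most $p_i+\varepsilon$, and $\sum_{i>m+1}q_i\le d_{m+1}-a$, so the top-$k$ entries of $q$ either lie among $p_1,\dots,p_m,q_{m+1}$, or involve tail entries whose values are controlled by the top of $p^{m,\varepsilon}$.

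\textbf{Step 3: the remaining checks.} For $k\ge m+1$ I would verify the key claim case by case, using that the index set $\{1,\dots,m+1\}\cup\{m+2,\dots,k\}$ in $p^{m,\varepsilon}$ already carries $\min\{P_k+\varepsilon,1\}$. This holds before the truncation index $\ell_\varepsilon$ (or $n$), and equals $1$ beyond it.

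I expect the main obstacle to be the indices $k\le m$. If $q_{m+1}$ exceeds some $p_j$ with $j\le m$, or some tail entry $q_i$ becomes large, the ordering can change. I would handle this by noting that the $k$ largest entries of $p^{m,\varepsilon}$ include $p_{m+1}+\min\{\varepsilon,d_{m+1}\}$ whenever it exceeds $p_k$. I would also use that any tail entry satisfies $q_i\le p_{m+1}+\min\{\varepsilon,d_{m+1}\}$, since $q_i\le p_i+\varepsilon$ and $q_i\le d_{m+1}$. A second tail entry raised above $p_k$ costs excess that has already been used up. Combining these, I would then conclude with a greedy exchange argument showing that no sequence with excess budget $\varepsilon$ beats the one that concentrates its excess at index $m+1$.
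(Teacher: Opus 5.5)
Your membership check and your argument for $k\ge m+1$ are the same as the paper's. The paper also bounds the top-$k$ sums of $\sigma$ by $P_k+\varepsilon$ via the excess budget. It compares this with the first $k$ entries of $\{p^{m,\varepsilon}_i\}$, which sum to $P_k+\varepsilon$ before the truncation index and to $1$ after it.

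You are right that $k\le m$ needs a separate argument, and the paper does not supply one. It asserts $\sum_{i=1}^k p^{m,\varepsilon}_i=\sum_{i=1}^k p_i+\varepsilon$ for all $k<\ell_\varepsilon$, which is false for $k\le m$. Its unsorted inequality $\sum_{i\le k}p^{m,\varepsilon}_i\ge\sum_{i\le k}q_i^{\downarrow}$ actually fails there. For example, take $p=(0.3,0.3,0.2,0.2)$, $m=1$, $\varepsilon=0.2$, $\sigma=\rho_{1,0.2}$ and $k=1$: this gives $0.3<0.5$. So working with top-$k$ sums of the rearranged sequence, as you do, is genuinely needed.

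The gap in your proposal is that the $k\le m$ case stops at an unexecuted ``greedy exchange argument.'' Also, ``a second raised tail entry costs excess already used up'' is not the operative constraint when $\delta\doteq\min\{\varepsilon,d_{m+1}\}=d_{m+1}<\varepsilon$. In that regime the binding constraint is the total mass $d_m$ on the indices $\ge m+1$.

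A direct estimate closes the case. For $k\le m$, the $k$ largest entries of $\{p^{m,\varepsilon}_i\}$ sum to at least $\max\{P_k,\,P_{k-1}+p_{m+1}+\delta\}$; use the index sets $\{1,\dots,k\}$ and $\{1,\dots,k-1,m+1\}$. Take any $k$-set $I$, let $K=I\cap\{m+1,m+2,\dots\}$ and $j=k-|K|$. The indices of $I$ in $\{1,\dots,m\}$ contribute at most $P_j$, since $q_i=p_i$ there. If $K=\emptyset$ we are done. Otherwise, using $\sum_{i\ge m+1}q_i=1-P_m=p_{m+1}+d_{m+1}$,
\[
\sum_{i\in K}q_i\le\min\Bigl\{\sum_{i\in K}p_i+\varepsilon,\ \sum_{i\ge m+1}q_i\Bigr\}\le\min\bigl\{|K|\,p_{m+1}+\varepsilon,\ p_{m+1}+d_{m+1}\bigr\}\le |K|\,p_{m+1}+\delta .
\]
Since $|K|\ge 1$ forces $j\le k-1\le m-1$, each of $p_{j+1},\dots,p_{k-1}$ is at least $p_{m+1}$. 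Therefore
\[
\sum_{i\in I}q_i\le P_j+(|K|-1)\,p_{m+1}+p_{m+1}+\delta\le P_{k-1}+p_{m+1}+\delta .
\]
With this inserted, your proof is complete, and it covers the case the paper's argument skips.
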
\smallskip

\emph{Proof.} If $\,n=\rank\rho<+\infty\,$ then we will assume that $\,m<n-1\,$, since otherwise $\,\rho_{m,\varepsilon}=\rho\,$ for any $\,\varepsilon\,$ and  $\,T_{m}(\rho)=\{\rho\}$.\smallskip

It is obvious that $\rho_{m,\varepsilon}\in T_{m}(\rho)$. It can be directly  verified that
$\,\frac{1}{2}\|\rho_{m,\varepsilon}-\rho\|_1=d_{m+1}\leq\varepsilon\,$ if $\,\rho_{m,\varepsilon}$ is defined by formula (\ref{f-1}) and that
$\,\frac{1}{2}\|\rho_{m,\varepsilon}-\rho\|_1=\varepsilon\,$ otherwise.

To prove (\ref{max}) we apply the arguments
used in the proof of Lemma 4.6 in \cite[Section 4.3]{H&D} with necessary modifications.

If $\varepsilon\geq d_{m+1}$ then $\{p^{m,\varepsilon}_{i}\}$ is defined by the formula  (\ref{f-1}) and , hence, (\ref{max}) directly follows from
the definition of the set $T_{m}(\rho)$ and the basic properties of the majorization order.

Assume that  $\{p^{m,\varepsilon}_{i}\}$ is defined by one of the formulae (\ref{f-2}) and (\ref{f-3}). If  $\{p^{m,\varepsilon}_{i}\}$ is defined by the formula (\ref{f-2})
then we set $\ell_\varepsilon=n$.

Let $\{q_i\}_{i=1}^{+\infty}$  be the spectrum of a state $\sigma\in  T_{m}(\rho)\cap U_\varepsilon(\rho)$
arranged in the non-increasing order. To prove that  $\rho_{m,\varepsilon}\succ \sigma$ it suffices to show that
\begin{equation}\label{max++}
 \sum_{i=1}^{k}p^{m,\varepsilon}_{i}\geq \sum_{i=1}^{k}q_{i}
\end{equation}
for all $\,k$, because inequality (\ref{max++}) implies the same inequality with the $n$-tuple $\{p^{m,\varepsilon}_{i}\}_{i=1}^{n}$ replaced by
its rearrangement in the non-increasing order.

Since $\,\|\rho-\sigma\|_1= \sum_{i=1}^{+\infty} |p_i-q_i|\leq 2\varepsilon$, while
$\{p_{i}\}_{i=1}^{n}$ and $\{q_{i}\}_{i=1}^{+\infty}$ are probability distributions, we have
$$
\sum_{i=1}^{+\infty} [p_i-q_i]_+=\sum_{i=1}^{+\infty} [p_i-q_i]_-\leq \varepsilon\qquad ([x]_{\pm}=\max\{\pm x, 0\}),
$$
where we assume that $p_i=0\,$ for $\,i>n$. So,
\begin{equation*}%\label{meq}
\sum_{i=1}^{k}(q_i-p_{i})\leq \varepsilon\quad \textrm{ and hence }\quad\sum_{i=1}^{k}q_{i}\leq \sum_{i=1}^{k}p_i+\varepsilon\quad  \forall k.
\end{equation*}
The last inequality implies (\ref{max++}) for all $k<l_\varepsilon$, since the construction of the distribution $\{p^{m,\varepsilon}_{i}\}_{i=1}^{n}$ implies
$$
\sum_{i=1}^{k}p^{m,\varepsilon}_{i}=\sum_{i=1}^{k}p_i+\varepsilon\quad \forall  k<l_\varepsilon.
$$

If $k\geq\ell_\varepsilon$ then the l.h.s. of (\ref{max++}) is equal to $1$. So, (\ref{max++}) holds trivially in this case. $\Box$\medskip

\begin{remark}\label{lim+} \textbf{(basic property of the state $\rho_{m,\varepsilon}$)}
Inequality (\ref{main+}) can be proved without using Proposition \ref{pm} by establishing the following property of the state $\rho_{m,\varepsilon}$:
$$
\sigma\stackrel{\,m}{\prec}\rho\quad\Rightarrow\quad  \sigma\prec\rho_{m,\varepsilon}\qquad \forall\sigma\in U_{\varepsilon}(\rho).
$$
If $\,\varepsilon=1\,$ then this property is directly verified. For $\,\varepsilon<1\,$ it can be proved by combining Lemmas \ref{pm+} and \ref{lim}: for any state $\,\sigma\in U_{\varepsilon}(\rho)\,$ such that $\,\sigma\stackrel{\,m}{\prec}\rho\,$ Lemma \ref{pm+} gives  an auxiliary state $\,\sigma_*\in T_m(\rho)\cap U_\varepsilon(\rho)\,$ such that $\,\sigma\prec\sigma_*$, while Lemma \ref{lim} shows that  $\,\sigma_*\prec\rho_{m,\varepsilon}$.
\end{remark}\medskip

By applying Theorem \ref{main} with $\varepsilon=1$ and denoting the state $\rho_{m,1}$ by $\rho_m$ we obtain the following\smallskip

\begin{corollary}\label{pm0} \emph{Let $\,m\,$ be a natural number and $\,\rho$  be a state in $\,\S(\H)$  with  the spectral representation (\ref{sprho+})
such that $f(\rho)$ is finite. Then
\begin{equation}\label{sub}
\sup\left\{f(\rho)-f(\sigma)\,\left|\, \sigma\stackrel{\,m}{\prec}\rho \right.\right\}\leq f(\rho)-f(\rho_m),
\end{equation}
where}
\begin{equation}\label{rho-m}
\rho_m\doteq\sum_{i=1}^{m} p_{i}|\varphi_i\rangle\langle \varphi_i|+(1-p_{1}-p_{2}-...-p_{m})|\varphi_{m+1}\rangle\langle \varphi_{m+1}|.
\end{equation}

\emph{Inequality (\ref{sub}) is tight: if the state $\rho$ is such that  $\,p_1+p_2+...+p_{m}\geq 1-p_m\,$  then $\,\rho_m\stackrel{\,m}{\prec}\rho\,$ and hence an equality
holds in (\ref{sub}).}\smallskip

\emph{The r.h.s. of (\ref{sub}) tends to zero as $\,m\to+\infty\,$ provided that the function $f$ satisfies one of the conditions $\,\rm a)\,$ and $\,\rm b)\,$ in the last claim of Theorem \ref{main}.}

\end{corollary}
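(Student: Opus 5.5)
We will derive Corollary \ref{pm0} by specializing Theorem \ref{main} to $\varepsilon=1$.

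\textbf{Step 1: identify $\rho_{m,1}$ with $\rho_m$.} Since $U_1(\rho)=\S(\H)$, the left-hand side of (\ref{main+}) with $\varepsilon=1$ is exactly the supremum in (\ref{sub}). Next we check that $\rho_{m,1}$ equals the state in (\ref{rho-m}).
\begin{itemize}
\item If $n<+\infty$ and $m\geq n-1$, then $\rho_{m,1}=\rho$ by (\ref{rho-me}). In this case $1-p_1-\dots-p_m=p_n$ when $m=n-1$, and $0$ when $m\geq n$, so formula (\ref{rho-m}) also returns $\rho$.
\item Otherwise $m<n-1$ and $\varepsilon=1\geq d_{m+1}$, so formula (\ref{f-1}) applies. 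It puts mass $d_m=1-p_1-\dots-p_m$ on $\varphi_{m+1}$ and zero on all later vectors, which is exactly (\ref{rho-m}).
\end{itemize}
Hence (\ref{sub}) follows directly from (\ref{main+}).

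\textbf{Step 2: tightness.} Assume $p_1+\dots+p_m\geq 1-p_m$, i.e. $p_m\geq d_m$. Then the eigenvalues of $\rho_m$ are $p_1,\dots,p_m,d_m$, and this list is already non-increasing. Its first $k$ partial sums coincide with those of $\rho$ for $k\leq m$, so $\rho_m\stackrel{\,m}{\prec}\rho$. Taking $\sigma=\rho_m$ in the supremum in (\ref{sub}) shows that equality holds. This is the one place needing care: the partial majorization must be computed for the rearranged spectrum, and the hypothesis $p_m\geq d_m$ is precisely what ensures the rearrangement does nothing.

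\textbf{Step 3: vanishing as $m\to+\infty$.} This is the last claim of Theorem \ref{main} with $\varepsilon=1$ fixed, since then $\min\{\varepsilon,1/m\}=1/m\to0$.

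For completeness, the two facts used there are elementary in this case.
\begin{itemize}
\item $\|\rho_m-\rho\|_1=2d_{m+1}\to0$.
\item $\rho_m\succ\rho$, because its partial sums are $\sum_{i\le k}p_i$ for $k\leq m$ and equal to $1$ for $k>m$.
\end{itemize}
Under condition b) these facts give $\liminf_m f(\rho_m)\geq f(\rho)$. Under condition a) the same inequality follows from lower semicontinuity. On the other hand, Schur concavity gives $f(\rho_m)\leq f(\rho)$. Therefore $f(\rho)-f(\rho_m)\to0$.
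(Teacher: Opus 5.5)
Your proof is correct and follows the paper's route: the paper obtains Corollary~\ref{pm0} by setting $\varepsilon=1$ in Theorem~\ref{main}, where $U_1(\rho)=\S(\H)$ and $\rho_{m,1}=\rho_m$ by (\ref{f-1}). Your tightness check via $p_m\geq d_m$ and your argument for the vanishing claim are also sound; the only small imprecision is in Step 3, where the partial sums you list for $\rho_m$ are those of the possibly unsorted list $p_1,\dots,p_m,d_m$, but $\rho_m\succ\rho$ still follows because rearranging in non-increasing order can only increase partial sums.
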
\smallskip

\begin{remark}\label{lim++} The upper bound (\ref{sub}) coincides with the upper bound  (\ref{main+}) in Theorem \ref{main} for any $\,\varepsilon\geq d_{m+1}\doteq 1-p_1-p_2-...-p_{m+1}\,$, since $\,\rho_m=\rho_{m,\varepsilon}\,$ for all such $\varepsilon$.
So, the information $\sigma \in U_\varepsilon(\rho)$  can be used to improve the upper bound (\ref{sub}) only if $\,\varepsilon<d_{m+1}$.
\end{remark}\smallskip

The  upper bound (\ref{sub}) is easily calculated and  can be applied to the von Neumann entropy $S$, the quantum Renyi entropy $R_\alpha$ and  the quantum Tsallis entropy $T_\alpha$ (in the role of $f$). Since all these entropies are lower semicontinuous functions, the last claim of Corollary \ref{pm0} shows that
\begin{equation}\label{E-r}
\lim_{m\to+\infty}\sup\left\{E(\rho)-E(\sigma)\,\left|\, \sigma\stackrel{\,m}{\prec}\rho \right.\right\}=0,\quad E=S, R_\alpha, T_\alpha,
\end{equation}
provided that $E(\rho)<+\infty$.\smallskip

\begin{example}\label{III} Consider the quantum Renyi entropy
\begin{equation*}%\label{R-e-d}
R_{\alpha}(\rho)=\frac{\ln\Tr\rho^{\alpha}}{1-\alpha},\quad\alpha>0,\quad \alpha\neq1,
\end{equation*}
of a state $\rho$ in $\,\S(\H)$  \cite{Wilde}. Assume that $\,\rho$  is a state with  the spectral representation (\ref{sprho+}) such that $R_{\alpha}(\rho)$ is finite. By setting $\,p_i=0\,$ for $\,i>n\,$ in the case $\,n<+\infty\,$  and using definition  (\ref{rho-m}) of the state $\,\rho_m\,$ we obtain
\begin{equation}\label{R-ineq}
\begin{array}{c}
\displaystyle R_{\alpha}(\rho)-R_{\alpha}(\rho_m)=
\displaystyle\frac{1}{1-\alpha}\left(\ln\left[\frac{\sum_{i=1}^{+\infty}p_i^\alpha}{\sum_{i=1}^{m}p_i^\alpha+\left[\sum_{i=m+1}^{+\infty}p_i\right]^\alpha}\right]\right)\\\\
\displaystyle=\left\{\begin{array}{ll}
       \frac{1}{1-\alpha}\left(\ln\left[1+\frac{\sum_{i=m+1}^{+\infty}p_i^\alpha-\left[\sum_{i=m+1}^{+\infty}p_i\right]^\alpha}{\sum_{i=1}^{m}p_i^\alpha
       +\left[\sum_{i=m+1}^{+\infty}p_i\right]^\alpha}\right]\right)  &\textrm{if}\quad \alpha\in(0,1)\\\\
       \frac{1}{\alpha-1}\left(\ln\left[1+\frac{\left[\sum_{i=m+1}^{+\infty}p_i\right]^\alpha-\sum_{i=m+1}^{+\infty}p_i^\alpha}{\sum_{i=1}^{+\infty}p_i^\alpha}\right]\right)   &\textrm{if}\quad \alpha\in(1,+\infty)
        \end{array}\right.\!.
\end{array}
\end{equation}
Thus, Corollary \ref{pm0} implies
$$
\! \sup\left\{R_{\alpha}(\rho)-R_{\alpha}(\sigma)\,\left|\, \sigma\stackrel{\,m}{\prec}\rho \right.\right\}\leq B\leq\left\{\begin{array}{ll}
       \!\frac{1}{1-\alpha}\left[\frac{\sum_{i=m+1}^{+\infty}p_i^\alpha-\left[\sum_{i=m+1}^{+\infty}p_i\right]^\alpha}{\sum_{i=1}^{m}p_i^\alpha
       +\left[\sum_{i=m+1}^{+\infty}p_i\right]^\alpha}\right]  &\textrm{if}\quad \alpha\in(0,1)\\\\
       \!\frac{1}{\alpha-1}\left[\frac{\left[\sum_{i=m+1}^{+\infty}p_i\right]^\alpha-\sum_{i=m+1}^{+\infty}p_i^\alpha}{\sum_{i=1}^{+\infty}p_i^\alpha}\right]   &\textrm{if}\quad \alpha\in(1,+\infty)
        \end{array}\right.\!,
$$
where $B$ is the r.h.s. of (\ref{R-ineq}). It is clear that  the r.h.s. of this inequality tends to zero as $\,m\to+\infty$ in both cases $\,\alpha\in(0,1)\,$ and $\,\alpha\in(1,+\infty)\,$ in accordance with (\ref{E-r}).

If the state $\rho$ is such that  $\,p_1+p_2+...+p_{m}\geq 1-p_m\,$  then an equality holds in the first of the above inequalities.
\end{example}\smallskip

The analogues bound can be easily written for the quantum Tsallis entropy $T_\alpha$ of any order $\alpha$.\smallskip

The case when $f$ is the von Neumann entropy is considered in Section 5 in  more detail.\smallskip

At the end of the section we present a version of Theorem \ref{main} formulated without
the condition of finiteness of $f(\rho)$. It is derived from Proposition \ref{pmc} by using Lemma \ref{lim} and the arguments from the proof of Theorem \ref{main}.\smallskip

\begin{theorem}\label{main-g}
\emph{Let $f$ be a Schur concave function on the set $\,\S(\H)$. Let
$\rho$ be a state in $\,\S(\H)$ with the spectral representation (\ref{sprho+}). Then
\begin{equation}\label{main+g}
\inf\left\{f(\sigma)\,\left|\, \sigma \in U_\varepsilon(\rho),\, \sigma\stackrel{\,m}{\prec}\rho \right.\right\}\geq f(\rho_{m,\varepsilon})
\end{equation}
for each natural $\,m$ and $\,\varepsilon\in[0,1]$, where $\rho_{m,\varepsilon}$ is the state defined in (\ref{rho-me}) and $U_\varepsilon(\rho)$ is the $\varepsilon$-vicinity of the state $\rho$ defined in (\ref{u-d}).}\smallskip

\emph{Inequality (\ref{main+g}) is optimal in the following sense: for any $m\in\N$  and  $\,\varepsilon\in[0,1]$ there is a state $\rho$ such that
$\,\rho_{m,\varepsilon}\stackrel{\,m}{\prec}\rho\,$ and hence an equality holds in (\ref{main+g}).}\smallskip

\emph{The r.h.s. of (\ref{main+g}) is a non-increasing  function of $\,\varepsilon$ for each $m\in\N$ and a nondecreasing  function of $\,m$ for each $\,\varepsilon\in[0,1]$. It tends to $\,f(\rho)\leq +\infty\,$ as
\begin{equation*}%\label{con-g}
\min\left\{\varepsilon,\frac{1}{m}\right\}\to 0
\end{equation*}
provided that one of following conditions hold:
\begin{enumerate}
  \item [$\rm a)$] the function $f$ is lower semicontinuous  on $\,\S(\H)$;
  \item [$\rm b)$] the limit relation
$$
\liminf_{n\to+\infty}f(\vartheta_n)\geq f(\vartheta_0)
$$
holds for any sequence $\{\vartheta_n\}\subset\S(\H)$ converging to a state $\vartheta_0\in\S(\H)$  with $\,f(\vartheta_0)\leq +\infty\,$ such that
$\vartheta_n\succ\vartheta_0$ for all $n$.
\end{enumerate}}
\end{theorem}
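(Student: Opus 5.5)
The plan is to mirror the proof of Theorem \ref{main}, replacing the differences $f(\rho)-f(\sigma)$ by the values $f(\sigma)$ themselves, so that no finiteness of $f(\rho)$ is needed.

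\emph{Inequality (\ref{main+g}).} By the second claim of Proposition \ref{pmc}, the infimum on the left of (\ref{main+g}) is at least $\inf\{f(\sigma)\,|\,\sigma\in T_m(\rho)\cap U_\varepsilon(\rho)\}$. Lemma \ref{lim} says that $\rho_{m,\varepsilon}$ lies in $T_m(\rho)\cap U_\varepsilon(\rho)$ and majorizes every element of this set. Schur concavity then gives $f(\sigma)\geq f(\rho_{m,\varepsilon})$ on the set, and in fact the infimum over $T_m(\rho)\cap U_\varepsilon(\rho)$ equals $f(\rho_{m,\varepsilon})$.

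\emph{Optimality.} I would reuse the construction from the proof of Theorem \ref{main}. For $\varepsilon<1$, choose $\{p_i\}$ with either $p_m\geq p_{m+1}+\varepsilon$ and $d_{m+1}>\varepsilon$, or $p_m\geq p_{m+1}+d_{m+1}$ and $d_{m+1}\leq\varepsilon$; for $\varepsilon=1$, use the second choice. In each case the entries of $\rho_{m,\varepsilon}$ with index $\le m$ equal $p_i$, and the $(m+1)$-th entry does not exceed $p_m$. Hence the first $m$ entries of $\rho_{m,\varepsilon}$ in non-increasing order are still $p_1,\dots,p_m$, which gives $\rho_{m,\varepsilon}\stackrel{\,m}{\prec}\rho$. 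Since $\rho_{m,\varepsilon}\in U_\varepsilon(\rho)$ by Lemma \ref{lim}, the infimum is attained and equality holds in (\ref{main+g}).

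\emph{Monotonicity.} Let $m\geq m'$ and $\varepsilon\leq\varepsilon'$. Then $T_m(\rho)\cap U_\varepsilon(\rho)\subseteq T_{m'}(\rho)\cap U_{\varepsilon'}(\rho)$, and $f(\rho_{m,\varepsilon})$ is the infimum of $f$ over the corresponding set. It follows that $f(\rho_{m,\varepsilon})\geq f(\rho_{m',\varepsilon'})$, i.e. the r.h.s. of (\ref{main+g}) is non-increasing in $\varepsilon$ and nondecreasing in $m$.

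\emph{Limit.} Two facts are needed:
\begin{itemize}
\item $\rho_{m,\varepsilon}\to\rho$ in trace norm as $\min\{\varepsilon,1/m\}\to0$. This holds because $\frac12\|\rho_{m,\varepsilon}-\rho\|_1\leq\min\{\varepsilon,d_{m+1}\}$ and $d_{m+1}\to0$.
\item $\rho_{m,\varepsilon}\succ\rho$. Indeed $\rho\in T_m(\rho)\cap U_\varepsilon(\rho)$, so this follows from Lemma \ref{lim}.
\end{itemize}
Schur concavity then gives $f(\rho_{m,\varepsilon})\leq f(\rho)$. Under either condition a) or b), which allows $f(\vartheta_0)=+\infty$, we also get $\liminf f(\rho_{m,\varepsilon})\geq f(\rho)$. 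Hence $f(\rho_{m,\varepsilon})\to f(\rho)$, including the case $f(\rho)=+\infty$.

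The step needing the most care is the limit argument with a possibly infinite value, and in particular checking that ``tends to $+\infty$'' follows from the liminf condition. Both sequential limits and the monotone two-parameter limit have to be handled. Since the r.h.s. is monotone in each variable, it suffices to treat the cases $\varepsilon\to0$ with $m$ fixed and $m\to\infty$ with $\varepsilon$ fixed, and to combine them through the bound $\frac12\|\rho_{m,\varepsilon}-\rho\|_1\leq\min\{\varepsilon,d_{m+1}\}$.
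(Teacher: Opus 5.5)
Your proposal is correct and follows the paper's route. You use Proposition~\ref{pmc} with Lemma~\ref{lim} for the inequality, the same two spectrum choices for optimality, the inclusion $T_m(\rho)\cap U_\varepsilon(\rho)\subseteq T_{m'}(\rho)\cap U_{\varepsilon'}(\rho)$ for monotonicity, and trace-norm convergence plus $\rho_{m,\varepsilon}\succ\rho$ for the limit. Your added details are sound: you get $\rho_{m,\varepsilon}\succ\rho$ from Lemma~\ref{lim} because $\rho\in T_m(\rho)\cap U_\varepsilon(\rho)$, and the bound $\frac{1}{2}\|\rho_{m,\varepsilon}-\rho\|_1\leq\min\{\varepsilon,d_{m+1}\}$ already gives the joint limit directly, so the monotonicity reduction in your last paragraph is not needed.
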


Inequality (\ref{main+g}) remains valid for $\,m=0\,$ if we assume that  $\sigma\stackrel{0}{\prec}\rho$ holds trivially for all states $\rho$ and $\sigma$.
In this case it means that
\begin{equation*}%\label{main++}
\inf\left\{f(\sigma)\,\left|\, \sigma\in U_\varepsilon(\rho)\right.\right\}\geq f(\rho_{0,\varepsilon}).
\end{equation*}
By the arguments in \cite{H&D} and the comments in Remark \ref{dh} \emph{an equality holds in this inequality} for any state $\,\rho\,$ and $\,\varepsilon\in[0,1]$.

\section{Application to the von Neumann entropy}

\subsection{General results}

In this section we apply the results of Section 4 to the von Neumann entropy -- the most important Schur concave function used in quantum theory.

Assume that $\rho$ is a state in $\,\S(\H)$ with the spectral representation (\ref{sprho+}). For any natural $\,m<n\doteq\rank\rho\,$  define the positive trace class operator
\begin{equation}\label{d-d+}
 \rho^{[m]}=\sum_{i=m+1}^{n} p_{i}|\varphi_i\rangle\langle \varphi_i|,
\end{equation}
i.e. $\rho^{[m]}$ is the operator obtained from the state $\rho$ by
removing the $m$-rank component corresponding to its $m$ maximal eigenvalues.\footnote{If the state $\rho$  has multiple eigenvalues then the orthonormal system $\{\varphi_i\}_{i=1}^{n}$ is not uniquely defined and hence there is an ambiguity in the definition of the operator $\rho^{[m]}$. However, the spectrum of $\rho^{[m]}$ is uniquely defined. So, dealing with the quantities depending on the spectrum of $\rho^{[m]}$ we may forget about this ambiguity.} If $\,n\doteq\rank\rho<+\infty\,$ then we assume that $\rho^{[m]}=0$ for any $m\geq n$.

Theorem \ref{main} implies (due to the Schur concavity and the lower semicontinuity of the von Neumann entropy) the following\smallskip

\begin{proposition}\label{main-S}
\emph{Let $\rho$ be a state in $\,\S(\H)$ with the spectral representation (\ref{sprho+}) such that $S(\rho)$ is finite.  Let $\,d_k\doteq1-p_1-p_2-...-p_k\,$ for $\,k\in\N\cap[1,n+1)$. Let  $\,m\in\N\,$ and $\,\varepsilon\in[0,1]\,$ be arbitrary and $\,\ell_{\varepsilon}\doteq\min\{k\in \N\,|\, d_k\leq \varepsilon\}$. Then
\begin{equation}\label{main+S}
\sup\left\{S(\rho)-S(\sigma)\,\left|\, \sigma \in U_\varepsilon(\rho),\, \sigma\stackrel{\,m}{\prec}\rho \right.\right\}\leq B(\rho,m,\varepsilon),
\end{equation}
where
$$
B(\rho,m,\varepsilon)\doteq\left\{\begin{array}{ll}
        \widehat{S}(\rho^{[m]})&\textrm{if}\;\;\varepsilon\geq d_{m+1}\\
        \Delta(\rho,m,\varepsilon)+\widehat{S}(\rho^{[\ell_\varepsilon-1]})&\textit{if}\;\; \varepsilon<d_{m+1}
        \end{array}\right.\!,
$$
$$
\Delta(\rho,m,\varepsilon)\doteq \eta(p_{m+1})+\eta(d_{\ell_\varepsilon-1})-\eta(p_{m+1}+\varepsilon)-\eta(d_{\ell_\varepsilon-1}-\varepsilon),
$$
$\rho^{[m]}$ and $\rho^{[\ell_\varepsilon-1]}$ are the operators defined according to the rule (\ref{d-d+}),
$\widehat{S}$ is the extension of the von Neumann entropy defined in (\ref{S-ext}) and $U_\varepsilon(\rho)$ is the $\varepsilon$-vicinity of the state $\rho$ defined in (\ref{u-d}).}\smallskip

\emph{Inequality (\ref{main+S}) is optimal in the following sense: for any $m\in\N$  and  $\varepsilon\in[0,1]$ there is a state $\rho$ such that
$\,\rho_{m,\varepsilon}\stackrel{\,m}{\prec}\rho\,$ and hence an equality holds in (\ref{main+S}).}\smallskip

\emph{$B(\rho,m,\varepsilon)$ is a nondecreasing function of $\,\varepsilon$ for each $m\in\N$ and a non-increasing function of $\,m$ for each $\,\varepsilon\in[0,1]$. Moreover,}
\begin{equation*}%\label{con-S}
B(\rho,m,\varepsilon)\rightarrow0\qquad  \textit{as}\qquad  \min\left\{\varepsilon,\frac{1}{m}\right\}\to 0.
\end{equation*}
\end{proposition}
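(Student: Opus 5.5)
Proposition \ref{main-S} is Theorem \ref{main} with $f=S$. The plan is to check the hypotheses of that theorem, compute $S(\rho)-S(\rho_{m,\varepsilon})$ explicitly from the three formulae (\ref{f-1})--(\ref{f-3}), and show that it coincides with $B(\rho,m,\varepsilon)$.

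\textbf{Hypotheses.} The von Neumann entropy is Schur concave and lower semicontinuous on $\S(\H)$, and $S(\rho)<+\infty$ by assumption. Hence inequality (\ref{main+}) holds with $f=S$. The optimality claim is taken over verbatim from Theorem \ref{main}. The same applies to the monotonicity in $\varepsilon$ and $m$, and to the vanishing of the bound as $\min\{\varepsilon,1/m\}\to0$, which follows by condition a). So everything reduces to proving the identity $S(\rho)-S(\rho_{m,\varepsilon})=B(\rho,m,\varepsilon)$. If $n<+\infty$ and $m\geq n-1$, or if $\varepsilon=0$, then $\rho_{m,\varepsilon}=\rho$. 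In that case one checks that $B=0$: either $\rho^{[m]}$ is zero or has rank one, or $\ell_\varepsilon-1=n$ and $\Delta=0$.

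\textbf{Case $\varepsilon\geq d_{m+1}$.} Here $\rho_{m,\varepsilon}$ keeps $p_1,\dots,p_m$ and puts the mass $d_m$ on $\varphi_{m+1}$. The difference of entropies is
$$\sum_{i>m}\eta(p_i)-\eta(d_m)=\Tr\eta(\rho^{[m]})-\eta(\Tr\rho^{[m]})=\widehat{S}(\rho^{[m]}),$$
by (\ref{S-ext}). Subtracting the series termwise is legitimate because $S(\rho)<\infty$.

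\textbf{Case $\varepsilon<d_{m+1}$ (formulae (\ref{f-2}) and (\ref{f-3})).} In formula (\ref{f-2}) we have $\ell_\varepsilon=n$ in the natural sense, since $d_n=0\le\varepsilon$ while $d_{n-1}=p_n\ge\varepsilon$; the boundary case $d_{n-1}=\varepsilon$ should be checked separately. The entries that change are the following.
\begin{itemize}
\item Position $m+1$ becomes $p_{m+1}+\varepsilon$.
\item Position $\ell=\ell_\varepsilon$ becomes $p_\ell+d_\ell-\varepsilon=d_{\ell-1}-\varepsilon$.
\item All entries beyond $\ell$ are set to $0$.
\end{itemize}
Therefore
$$S(\rho)-S(\rho_{m,\varepsilon})=\eta(p_{m+1})-\eta(p_{m+1}+\varepsilon)+\sum_{i\geq\ell}\eta(p_i)-\eta(d_{\ell-1}-\varepsilon).$$
Adding and subtracting $\eta(d_{\ell-1})$ splits this as
$$\Delta(\rho,m,\varepsilon)+\Bigl[\sum_{i\geq\ell}\eta(p_i)-\eta(d_{\ell-1})\Bigr],$$
where the bracket equals $\widehat{S}(\rho^{[\ell-1]})$, because $\Tr\rho^{[\ell-1]}=d_{\ell-1}$.

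The main obstacle is index bookkeeping at the edges. One needs $\ell_\varepsilon>m+1$ so that positions $m+1$ and $\ell_\varepsilon$ are distinct; this holds because $\varepsilon<d_{m+1}$. One must also confirm that formula (\ref{f-2}) is the special case $\ell_\varepsilon=n$ of the general expression, with $d_{n-1}=p_n$ and $\widehat{S}(\rho^{[n-1]})=0$, so that both formulae yield the same $B$. I would verify these two points explicitly, and also check the consistency of the definition of $\ell_\varepsilon$ when $d_{n-1}=\varepsilon$.
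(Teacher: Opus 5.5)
Your proposal is correct and is essentially the paper's argument. The paper obtains Proposition~\ref{main-S} by applying Theorem~\ref{main} to $f=S$, which is Schur concave and lower semicontinuous, and then only asserts in Note~A that $B(\rho,m,\varepsilon)=S(\rho)-S(\rho_{m,\varepsilon})$ "can be directly verified"; your case-by-case computation is exactly that verification.

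The edge cases you flag cause no trouble:
\begin{itemize}
\item If $\varepsilon=p_n<d_{m+1}$, then $\ell_\varepsilon=n-1$, and (\ref{f-2}) gives the same distribution as (\ref{f-3}) with this $\ell_\varepsilon$.
\item If $\varepsilon=0$ and $n$ is finite, then $\ell_\varepsilon=n$ (not $\ell_\varepsilon-1=n$ as you wrote). So $B=0$, because $\Delta$ vanishes identically and $\rho^{[n-1]}$ has rank one.
\end{itemize}
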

\smallskip

\textbf{Note A:} It can be directly verified that  $B(\rho,m,\varepsilon)=S(\rho)-S(\rho_{m,\varepsilon})$.
\smallskip

\textbf{Note B:} By Remark \ref{dh} in Section 4 inequality (\ref{main+S})  remains valid for $\,m=0\,$ if we assume  that $\sigma\stackrel{0}{\prec}\rho$ holds trivially for all states $\rho$ and $\sigma$.
In this case it means that
\begin{equation*}%\label{main++}
\sup\left\{S(\rho)-S(\sigma)\,\left|\, \sigma\in U_\varepsilon(\rho)\right.\right\}\leq B(\rho,0,\varepsilon).
\end{equation*}
Moreover, \emph{an equality holds in this inequality} for any state $\,\rho\,$ and $\,\varepsilon\in[0,1]$ by the arguments from \cite{H&D} and the
comments in Remark \ref{dh}.\smallskip

\begin{example}\label{g-exam}
Assume that
\begin{equation}\label{G-s}
 \rho_N\doteq(1-q)\sum_{i=1}^{+\infty} q^{i-1}|\phi_i\rangle\langle \phi_i|,\qquad q=\frac{N}{N+1},
\end{equation}
is the Gibbs state of a quantum oscillator  corresponding to the mean number of quanta $N$,
where $\,\{\phi_i\}_{i=1}^{+\infty}\,$ is the Fock basis in $\H$ \cite[Ch.12]{H-SCI}.
Then
$$
d_k=q^k\quad\textrm{ and }\quad \widehat{S}(\rho_N^{[k]})=q^{k}\,\frac{h(q)}{1-q},\quad k=0,1,2,...\quad(\rho_N^{[0]}=\rho_N).
$$
Hence, the r.h.s. of (\ref{main+S}) is equal to
$$
B(\rho_N,m,\varepsilon)\doteq\left\{\begin{array}{ll}
        q^{m}S(\rho_N)&\textrm{if}\;\;\varepsilon\geq q^{m+1}\\
        \Delta(\rho_N,m,\varepsilon)+\varepsilon q^{-\{\log_q \varepsilon\}}S(\rho_N)&\textrm{if}\;\; \varepsilon<q^{m+1}
        \end{array}\right.\!,
$$
where
$$
S(\rho_N)=\frac{h(q)}{1-q},\qquad \{\log_q \varepsilon\}\;\textrm{ is the fractional part of }\log_q \varepsilon
$$
and
$$
\Delta(\rho_N,m,\varepsilon)\doteq \eta(q^m(1-q))+\eta(\varepsilon q^{-\{\log_q \varepsilon\}})-\eta(q^m(1-q)+\varepsilon)-\eta(\varepsilon (q^{-\{\log_q \varepsilon\}}-1)).
$$
\end{example}

In Figures 1 and 2, the plots of the functions $\,\varepsilon\mapsto B(\rho_N,m,\varepsilon)\,$ with $\,N=2\,$ and $\,N=10\,$ for different values of $\,m\,$ are shown along with
the plots of the function
$$
\varepsilon\,\mapsto\, \sup\left\{S(\rho_N)-S(\sigma)\,\left|\, \sigma \in U_\varepsilon(\rho_N)\right.\right\}=B(\rho_N,0,\varepsilon)
$$
(see the remark after Proposition \ref{main-S}). The plot of the latter function is marked by $m=0$. The wavy structure of these plots is related to the term $\,q^{-\{\log_q \varepsilon\}}$. \medskip

\begin{figure}[t]

\centering
\begin{center}

\includegraphics[scale=0.4, bb=400  450 500 550]{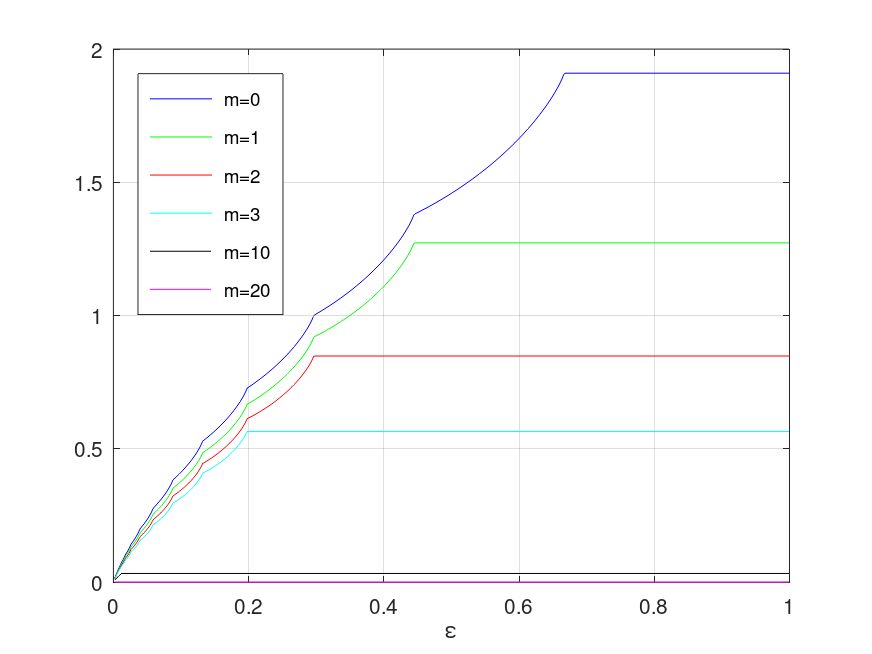}

\vspace{180pt}
\caption{The function $\,\varepsilon\mapsto B(\rho_N,m,\varepsilon)\,$ with $E=2$ and $m=0,1,2,3,10,20$.}
\end{center}

\label{Fig1}
\end{figure}

\begin{figure}[t]

\centering
\begin{center}

\includegraphics[scale=0.4, bb=400  450 500 550]{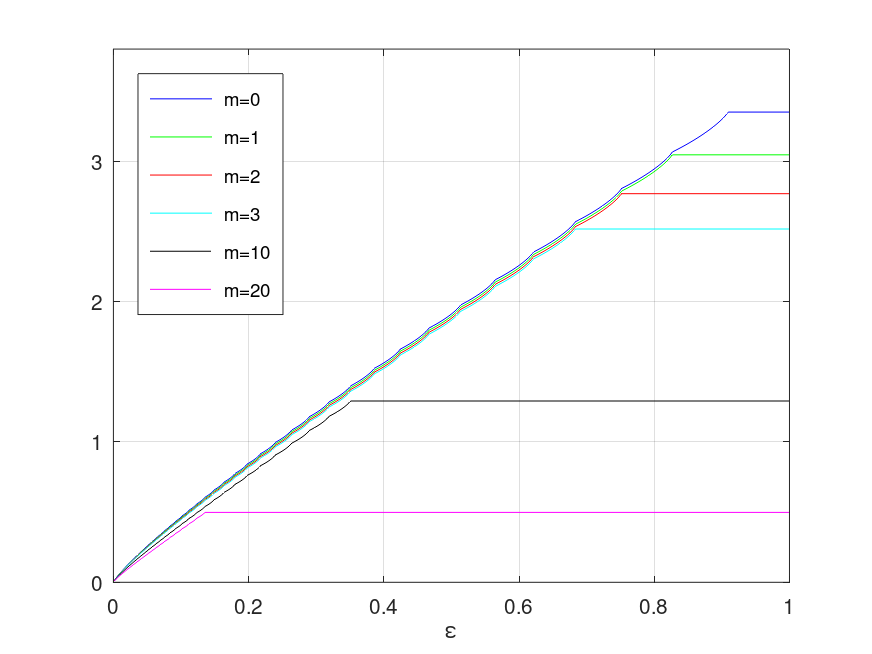}

\vspace{180pt}
\caption{The function $\,\varepsilon\mapsto B(\rho_N,m,\varepsilon)\,$ with $E=10$ and $m=0,1,2,3,10,20$.}
\end{center}

\label{Fig2}
\end{figure}

Denoting the state $\,\rho_{m,\varepsilon}$ with $\,\varepsilon=1\,$   we obtain the following
\smallskip

\begin{corollary}\label{pmS} \emph{Let $\,m\,$ be a natural number and $\,\rho$  be a state in $\,\S(\H)$  with  the spectral representation (\ref{sprho+})
such that $S(\rho)$ is finite. Then
\begin{equation}\label{sub+}
\sup\left\{S(\rho)-S(\sigma)\,\left|\, \sigma\stackrel{\,m}{\prec}\rho \right.\right\}\leq \widehat{S}(\rho^{[m]}),
\end{equation}
where $\rho^{[m]}$ is the operator defined in (\ref{d-d+}) and $\,\widehat{S}$ is the extension of the von Neumann entropy to the cone $\,\T_+(\H)$ defined in (\ref{S-ext}).}\smallskip

\emph{Inequality (\ref{sub}) is tight: if the state $\rho$ is such that  $\,p_1+p_2+...+p_{m}\geq 1-p_m\,$  then the state $\rho_m$ defined in (\ref{rho-m}) is $m$-partially majorized by the state $\rho$ and hence an equality
holds in (\ref{sub+}).}\smallskip

\emph{The r.h.s. of (\ref{sub+}) monotonously tends to zero as $\,m\to+\infty$.}
\end{corollary}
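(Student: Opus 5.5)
This will be derived as the special case $\varepsilon=1$ of Proposition \ref{main-S}, or equivalently by applying Corollary \ref{pm0} with $f=S$. Since $U_1(\rho)=\S(\H)$, the left-hand side of (\ref{sub+}) is exactly the supremum in (\ref{main+S}) with $\varepsilon=1$. Because $d_{m+1}\leq 1=\varepsilon$, the first branch of $B(\rho,m,\varepsilon)$ applies and gives $B(\rho,m,1)=\widehat{S}(\rho^{[m]})$. It is worth recording the direct verification of Note A in this case, namely $S(\rho)-S(\rho_m)=\widehat{S}(\rho^{[m]})$. The eigenvalues of $\rho_m$ in (\ref{rho-m}) are $p_1,\dots,p_m,d_m$, so
$$
S(\rho)-S(\rho_m)=\sum_{i>m}\eta(p_i)-\eta(d_m)=\Tr\eta(\rho^{[m]})-\eta(\Tr\rho^{[m]}),
$$
which is $\widehat{S}(\rho^{[m]})$ by (\ref{S-ext}). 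This uses $S(\rho)<+\infty$ to subtract the common finite terms. If $m\geq n=\rank\rho$, both sides are zero, since $\rho^{[m]}=0$ and $\rho_m=\rho$.

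For tightness I will check directly that $\rho_m\stackrel{\,m}{\prec}\rho$ when $p_1+\dots+p_m\geq 1-p_m$, i.e.\ $d_m\leq p_m$. Under this condition the eigenvalues $p_1,\dots,p_m,d_m$ are already in non-increasing order, so the partial sums of $\rho_m$ for $k\leq m$ coincide with those of $\rho$. Hence $\rho_m$ is $m$-partially majorized by $\rho$, and taking $\sigma=\rho_m$ in the supremum achieves the bound $S(\rho)-S(\rho_m)=\widehat{S}(\rho^{[m]})$, giving equality.

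For the last claim I will show monotonicity and convergence directly. Writing $\widehat{S}(\rho^{[m]})=\sum_{i>m}\eta(p_i)-\eta(d_m)$, we have
$$
\widehat{S}(\rho^{[m]})-\widehat{S}(\rho^{[m+1]})=\eta(p_{m+1})+\eta(d_{m+1})-\eta(p_{m+1}+d_{m+1})\geq 0
$$
by subadditivity of $\eta$ (equivalently, concavity with $\eta(0)=0$). Convergence to zero holds because $\sum_{i>m}\eta(p_i)\to0$, as a tail of the convergent series $S(\rho)$, and $\eta(d_m)\to0$ because $d_m\to0$; moreover $\widehat{S}\geq0$. Alternatively, convergence follows from Corollary \ref{pm0} via lower semicontinuity of $S$.

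The only delicate point is the bookkeeping with possibly infinite sums, which is handled by the finiteness of $S(\rho)$. I expect no real obstacle.
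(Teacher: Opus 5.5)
Your proposal is correct and follows the paper's route. The paper obtains this corollary as the case $\varepsilon=1$ of Proposition \ref{main-S}, equivalently Corollary \ref{pm0} with $f=S$, with Note A identifying the bound as $S(\rho)-S(\rho_m)=\widehat{S}(\rho^{[m]})$. The only difference is that you check tightness, monotonicity (via subadditivity of $\eta$) and vanishing by direct computation instead of inheriting them from Theorem \ref{main}; this is valid and more explicit.
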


\subsection{On $\varepsilon$-sufficient majorization rank of a state}

If $\rho$ is a state of finite rank $n$ then the  $(n-1)$-partial majorization of a state $\sigma$ by the state $\rho$
implies the standard majorization of $\sigma$ by $\rho$, and, hence, the inequality
\begin{equation}\label{S-ineq+}
  S(\sigma)\geq S(\rho).
\end{equation}
If $\rho$ is an infinite rank state then the $m$-partial majorization of a state $\sigma$ by the state $\rho$ for
a given arbitrary  $m$ does not imply inequality (\ref{S-ineq+}). Nevertheless, if $\rho$ is a state with finite entropy then Corollary \ref{pmS} shows that
the $m$-partial majorization of a state $\sigma$ by the state $\rho$ implies that the difference $S(\rho)-S(\sigma)$ cannot exceed some bound (depending on $\rho$ and $m$), which tends to zero as $\,m\to+\infty$.

Motivating by this observation  introduce, for a given $\varepsilon\geq0$,  the following characteristic of a mixed  state $\rho$ in $\,\S(\H)$ with finite entropy
\begin{equation}\label{mr}
mr_\varepsilon(\rho)\doteq \inf\left\{\,m\in\N\;\left|\;\,\sup_{\sigma\stackrel{\,m}{\prec}\rho}\,\frac{S(\rho)-S(\sigma)}{S(\rho)}\,\leq \,\varepsilon\,\right.\right\}+1.
\end{equation}

We use such definition, since it
seems reasonable to characterize the degree of violation of the inequality (\ref{S-ineq+}) using the \emph{relative error} $\,\epsilon\doteq\frac{S(\rho)-S(\sigma)}{S(\rho)}\,$ rather than the value of $\,S(\rho)-S(\sigma)$.

\textbf{Note:} The last claim of Corollary \ref{pmS}   implies that $mr_\varepsilon(\rho)$ is a finite natural number for any state  $\rho$ with finite entropy and $\,\varepsilon>0$.\smallskip

It is natural  to call  $mr_\varepsilon(\rho)$ the \emph{$\varepsilon$-sufficient majorization rank of a state $\rho$}. It characterizes
the rate of decreasing of the spectrum of $\rho$.\smallskip

Note that definition (\ref{mr}) with $\varepsilon=0$ gives $\,mr_0(\rho)=\rank\rho\,$ for any finite rank mixed state $\rho$. If $\rho$ is an
infinite rank state then $\,mr_0(\rho)=+\infty$, since for any natural $m$ one can find a state $\sigma$ such that $\,\sigma\stackrel{\,m}{\prec}\rho\,$ and $\,S(\sigma)<S(\rho)$.\smallskip

Corollary \ref{pmS} implies the following\smallskip

\begin{corollary}\label{smr} \emph{Let $\rho$ be a mixed state in $\,\S(\H)$ with finite entropy. Then
\begin{equation}\label{smr+}
mr_\varepsilon(\rho)\leq \min\left\{m\in\N\,\left|\, \widehat{S}(\rho^{[m]})\leq \varepsilon S(\rho) \right.\right\}+1,
\end{equation}
where $\,\widehat{S}$ is the extension of the von Neumann entropy to the cone $\,\T_+(\H)$ defined in (\ref{S-ext}) and $\rho^{[m]}$ is the state defined in (\ref{d-d+}).}\smallskip

\emph{Inequality (\ref{smr+}) is tight: for any $\varepsilon\in[0,1]$ there is a state $\rho$ such that  an equality holds in (\ref{smr+}).}
\end{corollary}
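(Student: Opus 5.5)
The inequality (\ref{smr+}) is a direct consequence of Corollary \ref{pmS}. I plan to set $m_*\doteq\min\{m\in\N\,|\,\widehat{S}(\rho^{[m]})\leq\varepsilon S(\rho)\}$ and first check that this minimum exists (for $\varepsilon>0$). Since $S(\rho)<+\infty$, the last claim of Corollary \ref{pmS} gives $\widehat{S}(\rho^{[m]})\to0$, and $\varepsilon S(\rho)>0$ because $\rho$ is mixed. For $\varepsilon=0$ and finite rank $n$ the convention $\rho^{[m]}=0$ for $m\geq n$ gives a finite minimum, while for infinite rank the right-hand side is $+\infty$ and there is nothing to prove. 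Applying (\ref{sub+}) with $m=m_*$ yields
$$
\sup_{\sigma\stackrel{\,m_*}{\prec}\rho}\frac{S(\rho)-S(\sigma)}{S(\rho)}\leq\frac{\widehat{S}(\rho^{[m_*]})}{S(\rho)}\leq\varepsilon .
$$
So $m_*$ belongs to the set whose infimum appears in (\ref{mr}). Hence $mr_\varepsilon(\rho)\leq m_*+1$, which is (\ref{smr+}).

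For tightness I need, for each $\varepsilon\in[0,1]$, a state where the infimum in (\ref{mr}) equals $m_*$. It suffices that for every $m<m_*$ the supremum in (\ref{mr}) exceeds $\varepsilon$. By the tightness part of Corollary \ref{pmS}, this supremum equals $\widehat{S}(\rho^{[m]})/S(\rho)$ whenever $p_1+\dots+p_m\geq1-p_m$, and for $m<m_*$ this exceeds $\varepsilon$ by minimality of $m_*$.

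The plan is therefore to use a finite-rank state whose spectrum decays fast enough that $p_1+\dots+p_m\geq 1-p_m$ holds for all $m\geq1$ up to the rank. The simplest choice is rank two, $\rho=p|\varphi_1\rangle\langle\varphi_1|+(1-p)|\varphi_2\rangle\langle\varphi_2|$ with $p\geq1/2$. Then $\widehat{S}(\rho^{[1]})=0$, so $m_*=1$ for every $\varepsilon\geq0$, and the bound equals $2=\rank\rho$. Also $mr_\varepsilon(\rho)=2$: $m=1$ belongs to the set in (\ref{mr}), and $\N$ starts at $1$, so the infimum is $1$. This gives equality for every $\varepsilon$.

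This case is rather degenerate. For a less trivial example I would take $p_i$ decaying geometrically with ratio at most $1/3$, which ensures $d_m\leq p_m$ for all $m$. I would then tune the parameter so that $\widehat{S}(\rho^{[m]})/S(\rho)$ crosses $\varepsilon$ at a prescribed $m$. The main obstacle I expect is making the condition $p_1+\dots+p_m\geq1-p_m$ hold at exactly the indices $m<m_*$ needed. The rank-two example avoids this, so I would present it as the witness for tightness.
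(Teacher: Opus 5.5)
Your proof is correct. The inequality part is exactly what the paper intends: the paper only says that Corollary \ref{pmS} implies the statement, and your application of (\ref{sub+}) at $m=m_*$, together with the check that the minimum exists, is that implication written out.

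For tightness, the paper gives no argument beyond the tightness clause of Corollary \ref{pmS}. Your rank-two witness is valid for the statement as written, but it works for a trivial reason. By definition $mr_\varepsilon(\rho)\geq 2$ for every state, and $\widehat{S}(\rho^{[1]})=0$ whenever $\rank\rho=2$, so both sides equal $2$ without using Corollary \ref{pmS} at all. The example therefore says nothing about the bound beyond this trivial level.

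The obstacle you anticipated in the non-degenerate case does not actually arise. If $p_{i+1}\leq p_i/2$ for all $i$, then $d_m\leq p_m\left(\frac12+\frac14+\dots\right)=p_m$ for every $m$. So the condition $p_1+\dots+p_m\geq 1-p_m$ holds at \emph{all} indices simultaneously, and no tuning is needed; your ratio $1/3$ is more than enough.

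For instance, take the Gibbs state $\rho_N$ of Example \ref{g-exam} with $N\leq 1$, i.e.\ $q\leq\frac12$. Then $d_m=q^m\leq(1-q)q^{m-1}=p_m$, so the supremum in (\ref{mr}) equals $\widehat{S}(\rho_N^{[m]})/S(\rho_N)=q^m$ for every $m$. Hence $mr_\varepsilon(\rho_N)=\widehat{mr}_\varepsilon(\rho_N)$ for every $\varepsilon\in[0,1]$, with the common value unbounded as $\varepsilon\to0$. This single state gives equality for all $\varepsilon$ at once and attains the bound at arbitrarily large values, which is the informative form of the tightness claim.
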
\smallskip

Since $\,\widehat{S}(\rho^{[m]})\,$ tends to zero as $\,m\to+\infty$ for any state with finite $S(\rho)$,  the r.h.s. of (\ref{smr+}) is
a well defined natural number. It will be denoted by $\widehat{mr}_\varepsilon(\rho)$.  Note that $\widehat{mr}_\varepsilon(\rho)$ is completely determined by
$\varepsilon$ and the spectrum of $\rho$.\smallskip

Note also that  $\,\widehat{mr}_0(\rho)=mr_0(\rho)=\rank\rho\,$ for any finite rank state $\rho$.\smallskip

\begin{example}\label{g-exam+} Continuing with Example \ref{g-exam} assume that $\rho_N$ is the Gibbs state of a quantum oscillator  corresponding to the mean number of quanta $N$ defined in (\ref{G-s}). Then
$$
\widehat{S}(\rho_N^{[m]})=q^{m}\,\frac{h(q)}{1-q}\quad\textrm{ and }\quad S(\rho_N)=\frac{h(q)}{1-q}
$$
and hence
\begin{equation*}%\label{smr+}
\widehat{mr}_\varepsilon(\rho_N)=\min\left\{m\in\N\,\left|\, q^{m}\leq \varepsilon\right.\right\}+1= [\log_q\varepsilon]+2,
\end{equation*}
where $\,[\log_q(\varepsilon)]\,$ is the integer part of the positive number $\log_q\varepsilon$.

The plots of the function  $\varepsilon\mapsto\widehat{mr}_\varepsilon(\rho_N)$ for different values of $N$
are shown on Figure 3 (in the logarithmic scales in both axis).
\end{example}

\begin{figure}[t]

\centering
\begin{center}

\includegraphics[scale=0.4, bb=400  450 500 550]{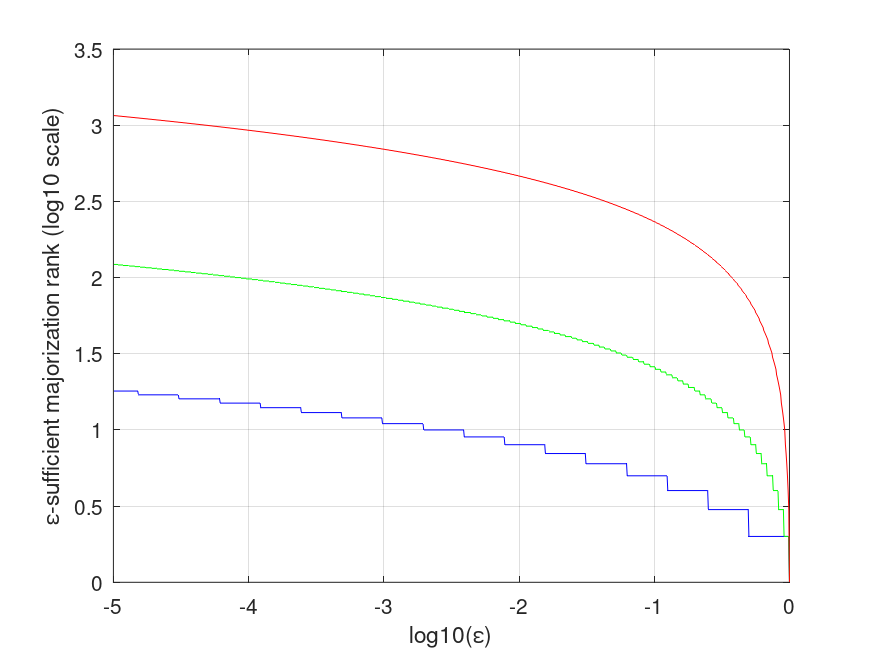}

\vspace{180pt}
\caption{The function $\,\varepsilon\mapsto\widehat{mr}_\varepsilon(\rho_N)\,$ with $\,N=1\,$ (blue line), $\,N=10\,$  (green  line) and $\,N=100\,$ (red line)
in the logarithmic scales.}
\end{center}

\label{Fig3}
\end{figure}

\section{Applications to Schur concave functions on the set of probability distributions}

All the results  of the article concerning Schur concave functions on the set of quantum states
can be easily reformulated for Schur concave functions on the set $\P^n$ of probability distributions with $\,n\leq+\infty\,$ outcomes
equipped with the \emph{total variation
distance} $\mathrm{TV}$,  which is defined for any distributions $\bar{p}=\{p_{i}\}_{i=1}^n$ and $\bar{q}=\{q_{i}\}_{i=1}^n$ in $\P^n$ as
\begin{equation*}%\label{TVD}
\mathrm{TV}(\bar{p},\bar{q})\doteq\frac{1}{2}\displaystyle\sum_{i=1}^n|p_{i}-q_{i}|.
\end{equation*}

The majorization relation $"\succ"$ for  probability distributions  is defined in (\ref{m-pd}).

Define the $m$-partial majorization relation $\bar{p}\stackrel{\,m}{\succ}\bar{q}$ for probability distributions  $\bar{p}=\{p_i\}_{i=1}^{n}$ and $\bar{q}=\{q_i\}_{i=1}^{n}$ via the system of inequalities
\begin{equation*}%\label{m-cond++}
\sum_{i=1}^{k}p^{\downarrow}_i\geq\sum_{i=1}^{k}q^{\downarrow}_i\qquad  k=1,2,..,m,
\end{equation*}
where $\{p^{\downarrow}\}_{i=1}^{n}$ and $\{q^{\downarrow}\}_{i=1}^{n}$
are the  probability distributions obtained from the distributions $\{p_i\}_{i=1}^{n}$ and $\{q_i\}_{i=1}^{n}$
by rearrangement in the non-increasing order.

If we define for  any probability distribution $\bar{p}=\{p_{i}\}_{i=1}^n$ the quantum state
 \begin{equation*}%\label{sprho}
\vartheta(\bar{p})=\sum_{i=1}^{n} p_{i}|\varphi_i\rangle\langle \varphi_i|,
\end{equation*}
where $\{\varphi_i\}_{i=1}^n$ is a fixed basis in an $n$-dimensional Hilbert space $\H$ then we
obtain a bijection from the set $\P^n$ onto the subset  $\S_0$ of $\S(\H)$ consisting of states diagonisable in the basis
$\{\varphi_i\}_{i=1}^n$. It is clear that
$$
\bar{p}\succ\bar{q}\;\Leftrightarrow\;\ \vartheta(\bar{p})\succ\vartheta(\bar{q}),\;\quad
\bar{p}\stackrel{\,m}{\succ}\bar{q}\;\Leftrightarrow\; \vartheta(\bar{p})\stackrel{\,m}{\succ}\vartheta(\bar{q})\quad\textrm{and}\quad  \mathrm{TV}(\bar{p},\bar{q})=\textstyle\frac{1}{2}\|\vartheta(\bar{p})-\vartheta(\bar{q})\|_1.
$$
It is easy to show (by checking the proofs) that we may reformulate all the results of the article by replacing the set $\S(\H)$ with the set $\S_0$.
So, the bijection $\vartheta(\bar{p})$ allows us to reformulate all the results
in terms of probability distributions and Schur concave functions
on the set $\,\P^n$ of probability distributions.

\section{Concluding remarks}

In the article, an universal technique for  obtaining upper bounds on
\begin{equation*}%\label{svp}
\sup\left\{f(\rho)-f(\sigma)\,\left|\, \textstyle\frac{1}{2}\|\rho-\sigma\|_1\leq\varepsilon \right.\right\}
\end{equation*}
and on
\begin{equation}\label{svp++}
\sup\left\{f(\rho)-f(\sigma)\,\left|\, \sigma\stackrel{\,m}{\prec}\rho,\, \textstyle\frac{1}{2}\|\rho-\sigma\|_1\leq\varepsilon \right.\right\},\qquad m\in\N,
\end{equation}
for a Schur concave function $f$ on the set of quantum states and any $\varepsilon\in[0,1]$ is proposed (Proposition \ref{pm}). Here, $\sigma\stackrel{\,m}{\prec}\rho$ means that the state $\sigma$ is $m$-partially majorized by the state $\rho$ in the sense described in Section 3. \smallskip

Then this technique was used to construct a tight upper bound on the supremum in (\ref{svp++}) depending on the spectrum of $\,\rho\,$
and  simple sufficient conditions for vanishing  this bound with $\,\min\left\{\varepsilon,\frac{1}{m}\right\}\to0\,$
have been found (Theorem \ref{main}).  \smallskip

The  proposed technique is really universal. In addition to proving Theorem \ref{main}, it can be used to prove all the semicontinuity bounds for the von Neumann entropy obtained in \cite{FCB}. It seems that this technique may be useful for quantifying continuity  of other Schur concave and Schur convex functions
on the set of quantum states. Its modification for Schur concave functions on the set probability distributions (described in Section 6) can be applied to
characteristics of discrete random variables.

In the process of solving the main tasks of the article, the state transformation $\,\rho\mapsto \rho_{m,\varepsilon}\,$ depending on $\,m\in\{0\}\cup\N\,$ and $\,\varepsilon\in[0,1]\,$  was proposed (by generalizing the construction from \cite{H&D}). It has the following  properties
$$
\rho_{m,\varepsilon}\in U_{\varepsilon}(\rho)\qquad\textrm{ and }\qquad \sigma\stackrel{\,m}{\prec}\rho\quad\Rightarrow\quad  \sigma\prec\rho_{m,\varepsilon}\qquad \forall\sigma\in U_{\varepsilon}(\rho),
$$
where $U_{\varepsilon}(\rho)$ is the $\varepsilon$-vicinity of the state $\rho$ w.r.t. the trace norm (the state $\rho_{m,\varepsilon}$ is defined at the beginning of
Section 4, the second of the above properties  is described in Remark \ref{lim+} in Section 4). This construction may be useful in analysis of any tasks, where the $m$-partial majorization relation is involved.
\bigskip

I am grateful to A.S.Holevo and to the participants of his seminar  "Quantum Probability, Statistics, Information" (the Steklov  Mathematical Institute) for useful discussion.

\end{document}